\documentclass{IEEEtran}

\usepackage{amsfonts}
\usepackage{mathrsfs}
\usepackage{amssymb,amsmath}
\usepackage[noblocks]{authblk}
\usepackage[compress]{cite}
\usepackage{bm}
\usepackage{algorithm}
\usepackage{algorithmic}
\usepackage{amsmath,amssymb,epsfig,graphics,subfigure}
\usepackage{theorem}
\usepackage{array,color}
\usepackage[compress]{cite}
\usepackage{algorithm}
\usepackage{algorithmic}
\usepackage[OT1]{fontenc}

\allowdisplaybreaks

\newtheorem{lemma}{Lemma}
\newtheorem{proof}{Proof}

\theoremheaderfont{\normalfont\bfseries}

\begin{document}

\title{New Viewpoint and Algorithms for Water-Filling Solutions in Wireless Communications}

\author{Chengwen Xing, Yindi Jing, Shuai Wang, Shaodan Ma, and H. Vincent Poor, \textsl{Fellow}, \textsl{IEEE} \thanks{C. Xing and S. Wang are with School of Information and Electronics, Beijing Institute of Technology, Beijing 100081,
China (e-mail: chengwenxing@ieee.org and swang@bit.edu.cn). }  \thanks{Y. Jing is with the Department of Electrical and Computer Engineering,
University of Alberta, Edmonton, AB T6G 1H7, Canada (e-mail: yindi@ualberta.ca). }
\thanks{S. Ma is with the State Key Laboratory of Internet of Things for Smart City and the Department of Electrical and Computer Engineering, University of Macau, Macao SAR, China(e-mail: shaodanma@um.edu.mo).}
\thanks{H. V. Poor is with the Department of Electrical Engineering, Princeton University, Princeton, NJ 08544, USA. (e-mail:
poor@princeton.edu).}
} \maketitle

\begin{abstract}
Water-filling solutions play an important role in  the designs for wireless communications, e.g., transmit covariance matrix design. A traditional physical understanding is to use the analogy of pouring water over a pool with fluctuating bottom. Numerous variants of water-filling solutions have been discovered during the evolution of wireless networks. To obtain the solution  values, iterative computations are required, even for simple cases with compact mathematical formulations. Thus, algorithm design is a key issue for the practical use of water-filling solutions, which however has been given marginal attention in the literature. Many existing algorithms are designed on a case-by-case basis for the variations of water-filling solutions  and/or with complex logics. In this paper, a new viewpoint for water-filling solutions is proposed to understand the problem \textit{dynamically} by considering changes in the increasing rates on different subchannels. This fresh viewpoint provides useful mechanism and fundamental information in finding the optimization solution values. Based on the new understanding, a novel and comprehensive method for practical water-filling algorithm design is proposed, which can be used for systems with various performance metrics and power constraints,  even for systems with imperfect channel state information (CSI). 

\end{abstract}

\begin{IEEEkeywords}
Water-filling solutions, index based algorithm.
\end{IEEEkeywords}

\section{Introduction}



Water-filling solutions play a central role in the optimization of communication systems. They are undoubtedly among the most fundamental and important results in wireless communication designs, signal processing designs, and network optimizations including transceiver optimization, training optimization, resource allocation, and so on, e.g.,  \cite{Cover2006,Boyd04,Goldsmith2005,Goldsmith1996,JYang1994,Waterfilling2004,Jafar2005,Teletar1995,Sampth01,Sampth03,Tse2005,Palomar_Waterfilling,Dai2014,Ozel2014,Palomar03}. Loosely speaking, optimal resource allocations for multi-dimensional communication systems usually lead to water-filling solutions. Over the past decade, wireless systems have evolved dramatically  and exhibited a great variety of configurations with many different performance requirements and physical constraints, e.g., \cite{Lifeng_Lai}.
This diversity results in a rich body of variants of water-filling solutions \cite{Khakurel2014,Jafar2005,Sampth03,Waterfilling2004,HMoon2011,Gao200902,PeterHe2013,PeterHe2017,
PeterHeTWC2018,PeterHeTSG2018,PeterHe2015,Palomar_Waterfilling,Sampth01,Dai2014,Lifeng_Lai,ClusterWF,Ling2012,Popescu2007,Hoang2008,Palomar03,Ozel2014}, from single water-level ones to multiple water-level ones \cite{Palomar_Waterfilling}, from solutions for perfect channel state information (CSI) to robust ones such as cluster water-filling \cite{ClusterWF}, and from constant water-level ones to cave-filling ones \cite{FFGaoTSP}.

In many conventional works, the first step in obtaining a water-filling solution for an optimization problem is to find the Karush-Kuhn-Tucker (KKT) conditions and manipulate them into a recognizable format which is usually referred to as the water-filling solution. KKT conditions are necessary conditions for the optimization, and are also sufficient if the problem is convex \cite{Boyd04}. While KKT conditions determine the optimal solutions, their initial formats are implicit and do not provide information in how to achieve the optimal solution values. Thus, sophisticated mathematical manipulations are needed to transform them into a water-filling structure. As communication systems and optimization problems get more complicated, the corresponding KKT conditions also become more  complicated, both in mathematical complexity and in the number of equations. Manipulating the KKT conditions into a recognizable format may become very difficult. First, the large number of complicated KKT conditions hinder efficient manipulations and clear understanding of their physical meaning in terms of  water-filling structure. Moreover, the derived water-filling solutions may not have compact and systematic format to allow the development of water-filling algorithms in an effective and unified manner.


Furthermore, the optimization design is not complete with the derived water-filling solutions as the solutions contain unknown parameters such as water levels. In other words, the solutions are still in  \textit{implicit form}. Thus, an important second step in obtaining the water-filling solution of an optimization problem is to find a practical algorithm. This step has not been given sufficient attention and in some cases has been ignored. Generally speaking, water-filling solutions consist of two major components, i.e., water level and water bottom, and a traditional imagery of puring is to pour water over a pool with different bottoms \cite{Cover2006}. Based on this analogy, several practical water-filling algorithms have been proposed \cite{PeterHe2013,Antonio2014,PeterHe2014,PeterHeTWC2018,
PeterHeTSG2018,PeterHeTVT2018,PeterHe2015,Bjornson2010,Dai2012,
Katoh1979,PeterHe2017,FFGaoTSP}. They generally differ from each other in many perspectives, e.g., optimization objectives and constraints, design logics, algorithm structures, computational complexity and so on. Usually water-filling algorithms are designed on a case-by-case basis. Interestingly, an efficient water-filling design framework is proposed in \cite{PeterHe2013} based on a geometric understanding of the water-filling operation. It generally applies to the throughput maximization under various constraints.


In this paper, we provide a new viewpoint on water-filling solutions. It has three major advantages: 1) it helps the understanding of water-filling results; 2) it avoids tedious and challenging manipulations of KKT conditions; and 3) it leads to efficient algorithms to find the solution values. Based on this new understanding, a unified water-filling algorithm design framework is proposed from an algebraic viewpoint instead of the geometric viewpoint in \cite{PeterHe2013}. The unified framework is general and applicable for various complicated communication optimization problems including but not limited to throughput maximization and mean-squared-error (MSE) minimization under general power constraints. The main contributions are summarized as follows.


\begin{itemize}
\item We provide a novel understanding from a dynamic perspective for optimization problems with water-filling solutions. In contrast with the traditional approach, this viewpoint can avoid tedious manipulations of KKT conditions in deriving water-filling solutions and greatly simplify water-filling algorithm design.

\item A standard and plausible notation used in water-filling solutions is the ``+'' operation where $x^+\triangleq\max\{x,0\}$. Its widely acknowledged physical meaning is that the resource (e.g., power) allocated to a subchannel must be nonnegative. However, direct implementation of the ``+'' operation may result in inconvenience and lead to problematic solutions for some optimization problems. In our work, index based operations are introduced in the algorithm designs to avoid the ``+'' operation and simplify the algorithm design.

\item  In addition to efficiency, the proposed method and the resulting algorithms are highly intuitive and understandable, and are also attractive from the implementation perspective. It is also compatible for extensions to complicated systems by using  simple cases as building blocks.

\item With the proposed method, we investigate a class of communication optimizations  with general convex objective functions under box constraints, where the allocated resource of each subchannel is bounded from both ends.
Corresponding algorithms for the optimal solution values are proposed.  Moreover, the algorithms can be extended  to serve even more general problems and have a wide range of applications.

\item  Robust optimizations for wireless systems with CSI uncertainties are also studied.
Algorithms for finding the solutions are proposed for robust weighted-MSE minimization, robust capacity maximization, robust worst-MSE minimization, and robust minimum capacity maximization for multiple-input multiple-output (MIMO) orthogonal frequency-division multiplexing (OFDM)\ systems, the last two of which were largely open.
\end{itemize}


\section{A New Viewpoint of Water-filling Solutions}
\label{sec_new_understanding}

We consider a convex optimization problem of the following form:
\begin{align}
\label{Opt_1}
\text{P1}: \ \max_{p_1,\cdots,p_K} \ \ &{\sum}_{k=1}^K f_k(p_k), \nonumber \\
\ {\rm{s.t.}} \ \ &{\sum}_{k=1}^K p_k \le P, \ \ p_k\ge0,
\end{align}where $P>0$ and the functions $f_k(\cdot)$ are real-valued, increasing, and strictly concave. Further assume that $f'_k(\cdot)$'s are continuous, where $f'_k(\cdot)$ denotes the first order derivative of $f_k(\cdot)$.
Many optimization problems in wireless communications have this format or contain this problem as an essential part, for example, the power allocation problem in MIMO capacity maximization. It is known that the optimal solution of (\ref{Opt_1}) has a water-filling structure. In what follows, we first explain the traditional treatment of this problem, then our new viewpoint and algorithm are elaborated along with the comparison of the two algorithms.\vspace{-3mm}

\subsection{Existing Treatment for Water-Filling Solutions}
Traditionally, Lagrange multiplier method has been used for (\ref{Opt_1}). The first step is to find the KKT conditions and from them to derive the water-filling solution of the problem in a compact format. As the objective function is a sum of decomposed concave functions and the constraints are linear, the problem is a convex one. Thus the KKT conditions are both necessary and sufficient. With straightforward calculations, the KKT conditions of (\ref{Opt_1}) are
\begin{align}
\label{eq2}
&f_k'(p_k)=\mu-\lambda_k,\
\mu\left({\sum}_{k=1}^K p_k - P\right)=0, \ \lambda_kp_k=0,
\end{align}where $\mu$ and $\lambda_k$ are the Lagrange multipliers corresponding to the two constraint sets. By rewriting the KKT conditions, the solution has the following water-filling structure:
\begin{eqnarray}
p_k=\left[g_k(\mu)\right]^+,\quad \text{and} \quad{\sum}_{k=1}^K p_k = P,
\label{water_c_3}
\end{eqnarray}
where
\begin{eqnarray}
g_k(\cdot)\triangleq {\rm Inv}[f_k'](\cdot),
\label{gk}
\end{eqnarray}
i.e., $g_k(\cdot)$ is the inverse function of $f_k'(\cdot)$. As $f_k(\cdot)$ is increasing and concave, the derivative function $f_k'(\cdot)$ is positive and monotonically decreasing. Based on the definition $f_k'(p_k)=\mu_k$, the inverse function $g_k(\cdot)$ denotes the mapping relationship from $\mu_k$ to $p_k$, i.e., $p_k=g_k(\mu_k)$. 
The Lagrange multiplier $\mu$ has the physical meaning of the water level. On the other hand, the function of $\lambda_k$'s is implicit in this water-filling solution as they affect the solution through the ``+'' operation. We would like to highlight that the ``+'' operation results from rigorous mathematical derivations. While it can be explained intuitively by ``the power level must be nonnegative,'' the ``+'' operation should not be added recklessly during the derivations merely due to this physical meaning. For more general problems, such practice can lead to sub-optimality in the solution.

Another important step in using water-filling solutions in communications systems is to obtain the solution values, i.e., the values of $p_k$'s of the solution in (\ref{water_c_3}). It is a non-trivial step. Existing algorithms are usually for specific applications and a unified framework is missing.

To obtain the values of the $p_k$'s from (\ref{water_c_3}), a practical water-filling algorithm is needed.  The major challenge is to find the index set of active subchannels with non-zero powers, i.e.,
\begin{equation}
\mathcal{S}_{active}=\{k|p_k>0\}.
\end{equation}In general, all possible subchannel combinations need to be studied. For each of the $2^K-1$ possibilities, the corresponding $p_k$-values can be found, and then the one with the highest objective function value among the $2^K-1$ possibilities is the optimal solution.
But this method is obviously inefficient as the complexity is exponential in $K$. For settings with simple $f_k$-functions and fortunate parameter values, a natural ordering of the subchannels exists, and the algorithm can be designed to have a lower complexity, where the number of possible active sets to be explored has the order of $\mathcal{O}(K\log K)$.

In \cite{Katoh1979}, the class of optimization problems P1 was studied for the case that the optimization variables takes only non-negative integer values. By discretizing the continuous range of the variables in P1 into a fine grid, the algorithm proposed in [32] may be used to find an approximate solution to P1, which yields to a scheme similar to gradient-based search. But in transforming the continuous problem into a discrete one, the precision is always a concern. The discretization error depends on both the step size and $f_k$-functions.
\vspace{-2mm}


\subsection{New Viewpoint and Algorithm}
\label{sec-2-2}

The traditional method for P1 as explained in the previous subsection has two major disadvantages. The first is the need of the transformation from KKT conditions to water-filling solutions. As the problem gets more general for more involved wireless systems and models, the transformation can become intractable. The second is the lack of general and effective algorithms in finding the values of the solution. In the following, from the perspective of a dynamic procedure, we give a new viewpoint on the solution of the optimization problem, which helps address both challenges. Since $f_k(\cdot)$ is concave, $f_k'(\cdot)$ is a decreasing function meaning that the increasing rate of $f_k(p_k)$ decreases as $p_k$ increases. The optimization problem P1 aims at allocating the total power $P$ over a series of functions, i.e., $f_k(p_k)$'s. We can  see this problem as dividing the available power $P$ into a large number of small portions and the power is allocated portion by portion. For each portion, we should choose the subchannel whose $f_k(\cdot)$ has the maximum increasing rate to maximize the total of $f_k(\cdot)$'s. As the increasing rate of this $f_k$-function decreases when a resource portion is added to it, after getting a certain amount of power portions, its increasing rate may become smaller than another subchannel. In this case, a new subchannel will have the fastest increasing rate and the next power portion should be added to this new subchannel. This procedure repeats until all resource portions have been allocated.
When the resource allocation stops, the functions that are allocated with nonzero powers will have the same increasing rate. Some subchannels may never get any power portion if their increasing rates are never the highest.

The result discussed above is presented in the following claim with rigorous proof.
\begin{lemma}
\label{water_1}
The following conditions are both necessary and sufficient for the optimal solution of P1:
\begin{align}
\left\{\begin{array}{ll} f_k'(p_k)=f_j'(p_j) & \text{for } k,j\in\mathcal{S}_{active}; \\
f_j'(p_j=0)\le f_k'(p_k) & \text{for } k\in\mathcal{S}_{active} \text{ and } j\notin\mathcal{S}_{active}; \\
\sum_{k=1}^K p_k=P. &
\end{array} \right.
\label{eqn-6}
\end{align}
\label{lemma-1}
\end{lemma}
\vspace{-8mm}

\begin{proof} We first prove the necessity part by contradiction. The necessity of the last line of (\ref{eqn-6}) is obvious and has been proved in many existing work. Thus the proof is omitted here. Denote the optimal solution of P1 as $p_1^*,\cdots,p_K^*$.  Assume without loss of generality that $p_1^*,p_2^*>0$ (i.e., $1,2\in \mathcal{S}_{active}$) but $f_1'(p^*_1)>f_2'(p^*_2)$. Since $f_1'$ and $f_2'$ are continuous, there exists an $\delta$ with $0<\delta<p_2^*$ such that $f_1'(p^*_1+x)>f_2'(p^*_2-x)$ for $0< x\le \delta$. Thus
\begin{align}
&f_1(p^*_1+\delta)+f_2(p^*_2-\delta) \nonumber  \\
&=f_1(p^*_1)+f_2(p^*_2)+\int_{0}^{\delta}
\left[f_1'(p^*_1+x) -f_2'(p^*_2-x)\right]dx \nonumber \\
&>f_1(p^*_1)+f_2(p^*_2).
\end{align}This shows that the new solution $\{p_1^*+\delta, p_2^*-\delta,p_3^*,\cdots,p_K^*\}$ (which satisfies all constraints by construction) is better than $p_1^*, p_2^*,p_3^*,\cdots,p_K^*$, which contradicts the assumption. This proves that the first line of (\ref{eqn-6}) is necessary.

Similarly, to prove that the second line of (\ref{eqn-6}) is necessary, assume without loss of generality that $p_1^*>0,p_2^*=0$ (i.e., $1\in \mathcal{S}_{active}$ and $2\notin \mathcal{S}_{active}$) but $f_2'(0)>f_1'(p^*_1) $. Since $f_2$ is strictly concave and $f_2'$ is continuous, there exists an $\delta$ with $0<\delta<p_1^*$ such that $f_1'(p^*_1-x)<f_2'(x)$ for $0< x\le \delta$. Thus
\begin{align}
&f_1(p^*_1-\delta)+f_2(\delta)\nonumber \\
&=f_1(p^*_1)+f_2(0)+\int_{0}^{\delta}
\left[f_2'(x) -f_1'(p^*_1-x)\right]{\rm{d}}x\nonumber \\
&>f_1(p^*_1)+f_2(0).
\end{align}This says that the solution $\{p_1^*-\delta, \delta,p_3^*,\cdots,p_K^*\}$ is better and thus leads to a contradiction.

For the sufficiency, it is enough to show that a solution satisfying (\ref{eqn-6}) is a local maximum. Since P1 is a convex optimization, its local maximum is unique and is the global maximum. Let $\{p_1^*,\cdots,p_K^*\}$ be the solution satisfying $(\ref{eqn-6})$ and consider a solution $\{p_1,\cdots,p_K\}$ in the vicinity of it. Define $\mathcal{S}_{+}\triangleq\{k|p_k>p_k^*\}$ and $\mathcal{S}_{-}\triangleq\{k|p_k<p_k^*\}$.  Notice that $\mathcal{S}_{-}\cap \mathcal{S}_{inactive}^*=\emptyset$, where $\mathcal{S}_{inactive}^*=\{k|p_k^*=0\}$. Thus
\begin{align}
&{\sum}_{k=1}^K f_k(p_k)={\sum}_{k=1}^K f_k(p_k^*) +\sum_{k\in \mathcal{S}_{+}}\!\!
\int_{0}^{p_k-p_k^*}\!\!\! f_k'(p_k^*+x_k)dx_k \nonumber \\
&\hspace{2cm}- \sum_{\hat{k}\in \mathcal{S}_{-}}
\!\!\!{\int}_{0}^{p_{\hat{k}}^*-p_{\hat{k}}}\!\!\! f_{\hat{k}}'(p_{\hat{k}}^*-\hat{x}_{\hat{k}})
d\hat{x}_{\hat{k}}.
\end{align}From the conditions on $f_k$'s and the assumption that $\{p_1^*,\cdots,p_K^*\}$ satisfies (\ref{eqn-6}), we have
\begin{equation}
f_k'(p_k^*+x)<f_k'(p_k^*)\le f_{\hat{k}}'(p_{\hat{k}}^*)<f_{\hat{k}}'(p_{\hat{k}}^*-\hat{x}),
\label{proof-cond-1}
\end{equation}for all $k\in \mathcal{S}_{+}$, $\hat{k}\in \mathcal{S}_{-}$, $x_k\in(0,p_k-p_k^*)$, $\hat{x}_{\hat{k}}\in (0, p_{\hat{k}}^*-p_{\hat{k}})$. Also, since $\sum_{k=1}^K p_k^*=P\ge \sum_{k=1}^K p_k$, we have
\begin{equation}
{\sum}_{k\in \mathcal{S}_{+}} (p_k-p_k^*)\le {\sum}_{\hat{k}\in \mathcal{S}_{-}}
(p_{\hat{k}}^*-p_{\hat{k}}).
\label{proof-cond-2}
\end{equation}By combining (\ref{proof-cond-1}) and  (\ref{proof-cond-2}), it can be concluded that $\sum_{k=1}^K f_k(p_k)<\sum_{k=1}^K f_k(p_k^*)$, and thus $\{p_1^*,\cdots,p_K^*\}$ is a local maximum.\footnote{The lemma can also be proved by showing that (6) is equivalent to the KKT conditions, which are necessary and sufficient for P1. But here we use a direct proof to help illustrate the proposed new viewpoint and avoid unnecessary dependence on existing water-filling results.}
\end{proof}

From (\ref{eqn-6}), we see that the value of $f_k'(p_k)$ for $k\in \mathcal{S}_{active}$, denoted as $\mu$, is the increasing rate for the optimal power allocation result. The allocated power on the subchannels can also be represented as functions of $\mu$:
\begin{align}
\label{water_3}
\left\{\begin{array}{ll}p_k=g_k(\mu) & \text{ for $k\in \mathcal{S}_{active}$}\\ p_k=0 & \text{ for $k\notin \mathcal{S}_{active}$} \end{array}\right.,
\end{align}where $g_k(\cdot)$ is defined in (\ref{gk}). From the total power constraint,
\begin{align}
\label{water_4}
P={\sum}_{k \in \mathcal{S}_{active}}g_k(\mu),
\end{align}based on which $\mu$ can be solved when the set of active subchannels $\mathcal{S}_{active}$ is known.

As explained in the previous subsection. The main difficulty of finding the solution values is to find $\mathcal{S}_{active}$. We propose the use of index operations ${\mathcal{I}}_k$'s to conquer this difficulty.
When subchannel $k$ is allocated nonzero power, ${\mathcal{I}}_k=1$, otherwise ${\mathcal{I}}_k=0$. With these indices, (\ref{water_3}) and (\ref{water_4}) are rewritten as
\begin{align}
\label{water_c_1}
p_k=g_k(\mu){\mathcal{I}}_k, \ k=1,\cdots,K,  \text{ and } P={\sum}_{k=1}^Kg_k(\mu){\mathcal{I}}_k.
\end{align} Clearly, for the subchannels in the inactive set, their corresponding indices and powers are set as zero, i.e., $\mathcal{I}_k=0$ and $p_k=0$. They are not necessary to be involved in the calculation in (14). It is noteworthy that $g_k(\cdot)$ is used to denote the inverse mapping of $f_k'(\cdot)$. An explicit closed-form or analytical expression for $g_k(\cdot)$ is not necessary. For cases that $g_k(\mu)$ cannot be written in an explicit formula, (\ref{water_c_1}) can be understood by the following alternative form
\begin{align}
\label{water_c_alter}
f_k'(p_k)=\mu \ \text{for} \ {\mathcal{I}}_k=1 \ \text{and} \  {\sum}_{k=1}^Kp_k=P.
\end{align}
With this result, we present a water-filling algorithm for P1 in \textbf{Algorithm~\ref{algorithm_1}}.
\begin{algorithm}[t]
\caption{Proposed water-filling algorithm for P1.}
\label{algorithm_1}
\begin{algorithmic}[1]
\STATE $\mathcal{I}_k=1$  $\text{for}$ $k=1,\cdots,K;$
\STATE Calculate $\mu$ and $p_k$'s $ \text{using} $  Eqn.~(\ref{water_c_1});
\WHILE{$\text{length}( \text{find}(\{p_k\} < 0 )) > 0$}
\STATE Find $\mathcal{S}_{inactive}\hspace{-1mm}=\{k|p_k\le0\}$;
\STATE Set $\mathcal{I}_k=0$ for $k\in \mathcal{S}_{inactive}$;
\STATE  $\text{Calculate}$ $\mu$ and ${p_k}$'s using Eqn.~(\ref{water_c_1});
\ENDWHILE
\RETURN $p_k$'s
\end{algorithmic}
\end{algorithm}In the first step of \textbf{Algorithm~\ref{algorithm_1}}, all subchannels are initialized as active and in the second step, the corresponding increasing rate and subchannel powers are calculated. As the computations of $p_k$'s do not consider the constraints that $p_k\ge 0$, it may appear that $p_k<0$ for some $k$. In this case, the corresponding index ${\mathcal{I}}_k$ will be set to zero and this subchannel will be allocated zero-power in the next round. In other words, this subchannel is inactive and there is no need to compute the corresponding derivative and inverse functions, i.e., $f_k'(p_k=0)$.  The procedure continues until all active subchannels are allocated nonnegative powers.


\begin{lemma}
Algorithm~\ref{algorithm_1} converges and achieves the optimal solution of P1.
\label{lemma-2}
\end{lemma}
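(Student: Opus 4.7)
The plan is to invoke Lemma~\ref{lemma-1} on the allocation returned by Algorithm~\ref{algorithm_1}, and to argue separately that the algorithm terminates in finitely many steps. I would begin with termination, which is immediate: each pass through the while loop flips at least one index $\mathcal{I}_k$ from $1$ to $0$, and no index is ever reactivated, so $|\mathcal{S}_{active}|$ strictly decreases and the loop exits after at most $K-1$ rounds. A sanity check along the way is that the active set never empties, since the defining equation $\sum_{k\in\mathcal{S}_{active}}g_k(\mu)=P>0$ forces at least one $g_k(\mu)>0$, so there is always at least one subchannel carrying strictly positive power.

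Next I would match the exit state of the algorithm to the three conditions in (\ref{eqn-6}). Two of them are essentially built into (\ref{water_c_1}): by the loop exit criterion $p_k^*=g_k(\mu^*)>0$ for every $k\in\mathcal{S}_{active}$, so $f_k'(p_k^*)=\mu^*$ is constant across the active subchannels (first line of (\ref{eqn-6})), and $\sum_k p_k^*=P$ is precisely the equation that determines $\mu^*$ (third line).

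The main obstacle is the second line: $f_j'(0)\le\mu^*$ for every inactive $j$. My plan is to track the water level $\mu^{(t)}$ across iterations and establish the monotonicity $\mu^{(t+1)}\ge\mu^{(t)}$. Since each $f_k'$ is strictly decreasing, its inverse $g_k$ is also strictly decreasing. When an index $j$ is pruned at iteration $t$, one has $g_j(\mu^{(t)})\le 0$, and therefore
\begin{equation*}
{\sum}_{k\in\mathcal{S}_{active}^{(t+1)}}g_k(\mu^{(t)}) \;=\; P-{\sum}_{j\text{ pruned}}g_j(\mu^{(t)}) \;\ge\; P.
\end{equation*}
Since every $g_k$ is strictly decreasing, restoring the sum back to $P$ at iteration $t+1$ forces $\mu$ upward, giving the claimed monotonicity. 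Finally, for any $j$ removed at some iteration $t$ we have $g_j(\mu^{(t)})\le 0=g_j(f_j'(0))$, which by the strict monotonicity of $g_j$ yields $\mu^{(t)}\ge f_j'(0)$, and the chain $\mu^*\ge\mu^{(t)}\ge f_j'(0)$ supplies the missing condition. Lemma~\ref{lemma-1} then certifies optimality of the returned allocation.
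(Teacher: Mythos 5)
Your proposal is correct and follows essentially the same route as the paper's own proof: termination via the monotone shrinkage of the active set, then verification of the three conditions of Lemma~\ref{lemma-1}, with the key step being that the water level is nondecreasing across iterations so that every pruned subchannel $j$ satisfies $f_j'(0)\le\mu^{(t)}\le\mu^*$. The only difference is cosmetic but in your favor: your justification of $\mu^{(t+1)}\ge\mu^{(t)}$ via $\sum_{k\in\mathcal{S}_{active}^{(t+1)}}g_k(\mu^{(t)})\ge P$ together with the strict monotonicity of the $g_k$'s is more explicit than the paper's informal remark that the remaining subchannels have less power available and hence a higher increasing rate.
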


\begin{proof} Since for each iteration in Algorithm~\ref{algorithm_1}, the new set for $\mathcal{S}_{active}=\{k|p_k>0\}$ is either the same as the previous $\mathcal{S}_{active}$ (thus the algorithm terminates) or shrinks to a subset of the previous $\mathcal{S}_{active}$. As the size of the initial set is $K$, it is obvious that the algorithm converges within $K$ iterations.

Now we prove that Algorithm~\ref{algorithm_1} converges to the optimal solution of P1. First, since $P>0$, at any iteration, it is impossible to have $p_k\le0$ for all $k$. In other words, there exists a $k$ such that $p_k>0$. Let $\{p_1,\cdots,p_K\}$ be the solution found by Algorithm~\ref{algorithm_1} at the $m$th iteration. From  Step 2 and Step 6, it is obvious that the solution satisfies the first and last conditions of (\ref{eqn-6}). For any $j \notin \mathcal{S}_{active}=\{k|p_k>0\}$, we have $p_j<0$ in  one of the previous iterations. Denote the iteration round for $p_j < 0$ as $m'$. Thus from (\ref{water_c_1}), $p_j=g_j(\mu^{(m')})<0$, from which $\mu^{(m')}>f_j'(0)$, where $\mu^{(m')}$ is the achieved increasing rate at the $m'$th iteration. Notice that $\mu^{(m')}=f_k'(p_k^{(m')})$ for subchannel $k$ in the active set of the $m'$th iteration.  With the proposed algorithm, subchannel $j$ is removed by setting $p_j=0$, and in the next iteration, the sum power available for the remaining active subchannels decreases. The achieved increasing rate for this new iteration is higher, i.e., $\mu^{(m')}<\mu^{(m'+1)}$. Denote the overall iteration number for the algorithm as $m$. Since $m'\le m$, we have $f_j'(0)<\mu^{(m')}\le \mu^{(m)}=f_k'(p_k)$ for $k\in\mathcal{S}_{active}$. This proves that the solution found by the algorithm also satisfies the second condition of (\ref{eqn-6}). As (\ref{eqn-6}) is proved to be sufficient for the optimal solution in Lemma \ref{lemma-1}, the solution found by Algorithm~\ref{algorithm_1} is thus the optimal one.\end{proof}

\noindent \textbf{Remark:} When the inverse functions in (\ref{gk}) and $\mu$ can be derived in closed-forms, the water-filling solution and Algorithm 1 can be implemented efficiently. For each iteration of Algorithm 1, the complexity of the calculations of $\mu$ and $p_k$ is $\mathcal{O}(K)$. Since there are at most $K-1$ iterations, the worst-case complexity of Algorithm 1 is ${\mathcal{O}}(K^2)$.

Otherwise when the inverse function or $\mu$ does not have a closed-form, a numerical method such as bisection search is needed for an approximate solution. The complexity of Algorithm 1 depends on the numerical algorithm and precision.
As $f_k'(p_k)$ is monotonic in $p_k$ and together with the facts that $f_k'(p_k)=\mu$ for ${\mathcal{I}}_k=1$ and $\sum_{k=1}^Kp_k=P$, the variables $p_k$'s can always be computed by using two dimensional bisection search. More specifically, the inner round bisection search is performed over $p_k$'s regarding to the equalities $f_k'(p_k)=\mu$ for $k=1,\cdots,K$ and the outer round bisection search is performed over $\mu$ considering the constraint $\sum_{k=1}^Kp_k=P$.
The inner bisection algorithm is given by Algorithm~\ref{alg-bisection-inner} and the two-dimensional bisection search algorithm is given by Algorithm~\ref{alg-bisection-two}.
It is worth highlighting that if we have $p_k<0$ by solving $f_k'(p_k)=\mu$, it means that for $p_k\ge0$, $f_k'(p_k)=\mu$ cannot hold and then we set ${\mathcal{I}}_k=0$ and $p_k=0$.
The positive parameters $\phi_k$ in Algorithm~\ref{alg-bisection-inner} and $\phi_{P}$ in Algorithm~\ref{alg-bisection-two} are the thresholds determining the accuracy of the bisection search algorithms. Bisection search is a one-dimension search with very high efficiency and accuracy. For example, for an unit interval, after 30 iterations, the accuracy of a bisection search algorithm is $2^{-30}$. 
\begin{algorithm}[t]
\caption{The inner bisection search algorithm to find $p_k$ with given $\mu$.}
\label{alg-bisection-inner}
\begin{algorithmic}[1]
\STATE Initialize $\phi_k$, $p_{k,\min}$  and  $p_{k,\max}$;
\STATE Let $p_{k}=(p_{k,\min}+p_{k,\max})/2$;
\WHILE{$|f'_k(p_{k})- \mu|>\phi_{k}$}
\IF {$f'_k(p_{k})<\mu$}
\STATE $p_{k,\max}=(p_{k,\min}+p_{k,\max})/2$;
\ELSE
\STATE $p_{k,\min}=(p_{k,\min}+p_{k,\max})/2$;
\ENDIF
\STATE Let $p_k=(p_{k,\min}+p_{k,\max})/2$;
\ENDWHILE
\RETURN $p_{k}$.
\end{algorithmic}
\end{algorithm}
\begin{algorithm}[t]
\caption{The two dimensional bisection search algorithm to solve (\ref{water_c_1}) with given $\mathcal{I}_k$'s.}
\label{alg-bisection-two}
\begin{algorithmic}[1]
\STATE Initialize $\phi_{P}$, $\mu_{\min}$  and  $\mu_{\max}$;
\STATE Let $\mu=(\mu_{\min}+\mu_{\max})/2$;
\STATE Calculate $p_k$ if ${\mathcal{I}}_k=1$ by performing \textbf{Algorithm~\ref{alg-bisection-inner}} with $\mu$;
\WHILE{$|\sum_{k=1}^Kp_k- P|>\phi_{P}$}
\IF {$\sum_{k=1}^Kp_k<P$}
\STATE $\mu_{\max}=(\mu_{\min}+\mu_{\max})/2$;
\ELSE
\STATE $\mu_{\min}=(\mu_{\min}+\mu_{\max})/2$;
\ENDIF
\STATE Let $\mu=(\mu_{\min}+\mu_{\max})/2$;
\STATE Calculate $p_k$ if ${\mathcal{I}}_k=1$ via performing  \textbf{Algorithm~\ref{alg-bisection-inner}} with $\mu$;
\ENDWHILE
\RETURN $p_{k}$'s.
\end{algorithmic}
\end{algorithm}
Even under with the adoption of the bisection search, our proposed algorithm has fundamental difference to the ones based on solving KKT conditions and the gradient-based search algorithm. Specifically, solving KKT conditions involves the calculation of $K+1$ Lagrange multipliers as shown in (\ref{eq2}) and their computation may require $(K+1)$-dimensional search and well defined step sizes. The worst-case complexity is up to ${\mathcal{O}}(K^3)$.
But our proposed algorithm only requires two-dimensional bisection search which does not need the setup of step sizes (i.e., the concerned variable is automatically updated as the middle point in its considered region) and its worst-case complexity is much lower.

Interestingly, our proposed algorithm introduces the indicators $\mathcal{I}_k$'s in dealing with the nonnegative constraints and the partition of the subchannels into active ones and inactive ones (which can be further partitioned into upper-bound-tight-ones, lower-bound-tight-ones for more complicated problems). These ideas lead to efficient algorithms to find the solution values (an issue that was somewhat neglected in many papers on MIMO communication designs), and allow extensions to more complicated cases as shown later in this paper.

\subsection{Comparison and Application Examples}
\label{Sec-II-Ex}
The proposed new method, including the viewpoint and the algorithm, does not require manipulation of the KKT conditions into a format of  water-filling solutions. Further,  the proposed algorithm is general and has low-complexity with the worst-case number of iterations being $K-1$. On average, the number of iterations can be much smaller than $K-1$ since the proposed algorithm allows multiple channels to be made inactive in each iteration as long as their positivity constraints cannot be satisfied. For the traditional scheme, in general, all possible subsets of active subchannels need to be tested, whose complexity is exponential in $K$. For special cases when an ordering among the subchannel exists, the complexity can be reduced to $\mathcal{O}(K{\rm{log}}(K))$, which is still higher than the complexity of the proposed one. In what follows, examples are provided to better elaborate the difference and advantages of the proposed method.


\noindent \textbf{Example 1:} A general weighted sum capacity maximization problem has the following form:
\begin{align}
\label{Example_1}
\max_{p_1,\cdots,p_k} \ &{\sum}_{k=1}^Kw_k{\rm{log}}(b_k+a_kp_k) \nonumber \\
\ {\rm{s.t.}} \ & {\sum}_{k=1}^K p_k \le P, \ p_k \ge 0,
\end{align}where $w_k$, $b_k$ and $a_k$ are arbitrary non-negative parameters.

With our proposed scheme, we first obtain from the objective function in (\ref{Example_1})
\begin{equation}
g_k(\mu)={w_k}/{\mu}-{b_k}/{a_k}.
\label{ex1-1}
\end{equation}Then the solution values can be found by {\bf Algorithm \ref{algorithm_1}} within $K-1$ iterations. Specifically, from (\ref{water_c_1}),
\begin{equation}
\mu={{\sum}_{k=1}^K w_k \mathcal{I}_k }/\left({P+{\sum}_{k=1}^K\frac{b_k}{a_k}\mathcal{I}_k}\right).
\label{ex1-2}
\end{equation}The calculations in Step 2 and Step 6 can be achieved straightforwardly using (\ref{ex1-1}) and (\ref{ex1-2}).

With the traditional scheme, via calculations, the following water-filling solution is obtained:
\begin{align}
\label{Example_1_solution}
p_k=\left(\frac{w_k}{\mu}-\frac{b_k}{a_k}\right)^{+}, \quad {\sum}_{k=1}^K p_k = P.
\end{align}Though in compact neat form, to find the values of the optimal $p_k$'s is not self-explanatory. All possible active subchannel sets need to be tried to find the best one. In \cite{PeterHe2013}, an efficient water-filling algorithm was proposed for \textbf{Example 1}, where $a_iw_i/b_i$'s are in decreasing order. Our algorithm is essentially different from that in \cite{PeterHe2013} and the difference will be further elaborated at the end of this section.
%

\noindent \textbf{Example 2:} A general weighted MSE minimization problem can be written in the following form:
\begin{align}
\label{Example_2}
 \max_{p_1,\cdots,p_k} \ &{\sum}_{k=1}^K-\frac{w_k}{b_k+a_kp_k}  \nonumber \\
 {\rm{s.t.}} \ & {\sum}_{k=1}^K p_k \le P, \ p_k \ge 0,
\end{align}where $w_k$, $b_k$ and $a_k$ are arbitrary non-negative parameters.

With the proposed scheme, we first obtain from the problem
\begin{equation}
g_k(\mu)=\sqrt{\frac{w_k}{a_k\mu}}-\frac{b_k}{a_k}.
\label{ex2-1}
\end{equation}Similarly, {\bf Algorithm \ref{algorithm_1}} can be used to find the solution values. Specifically, from (\ref{water_c_1}),
\begin{equation}
\mu={{\sum}_{k=1}^K \sqrt{\frac{w_k}{a_k}}\mathcal{I}_k}/({P+{\sum}_{k=1}^K\frac{b_k}{a_k}\mathcal{I}_k}).
\label{ex2-2}
\end{equation}(\ref{ex2-1}) and (\ref{ex2-2}) can be used straightforwardly for the calculations in Steps 2 and 6.

With the traditional scheme, via calculations, the following water-filling solution is obtained as the first step:
\begin{align}
p_k=\left(\sqrt{\frac{w_k}{a_k\mu}}-\frac{b_k}{a_k}\right)^{+},\quad
{\sum}_{k=1}^K p_k = P.
\label{ex2-3}
\end{align}The same difficulty as in Example 1 appears here. Though (\ref{ex2-3}) is in compact neat form, it is unclear how to find the values of the optimal solution from it. In general all possible active subchannel sets need to be tried to find the best one whose complexity is exponential in $K$. Ordering of the subchannels is only possible with stringent ordering conditions on the parameters, e.g., $\sqrt{w_k/a_k}$ and $a_k/b_k$ can be ordered decreasingly simultaneously.

\noindent \textbf{Example 3:} The capacity maximization for dual-hop MIMO amplify-and-forward relaying networks can be casted as follows:
\begin{align}
\label{Example_3}
 \max_{p_1,\cdots,p_K} \ \ & {\sum}_{k=1}^K-w_k{\rm{log}}\left(1-\frac{a_kb_kp_k}{1+b_kp_k}\right) \nonumber \\
 \ {\rm{s.t.}} \ \ \ &{\sum}_{k=1}^K p_k \le P, \ \ p_k \ge 0,
\end{align}where $w_k,b_k$ are nonnegative and $0< a_k<1$.

With our proposed scheme, we can obtain from the objective function of the problem
\begin{align}
g_k(\mu)=\frac{\sqrt{a_k^2+
4w_k(1-a_k)a_kb_k/\mu}-(2-a_k)}{2(1-a_k)b_k}.
\end{align}
Then the solution values can be found by {\bf Algorithm \ref{algorithm_1}}. But for this case, to find the value of $\mu$ (for Steps 2 and 6), numerical bisection search is needed to solve the following equation
\begin{align}
\sum_{\{k|\mathcal{I}_k=1\}}\frac{\sqrt{a_k^2+
4w_k(1-a_k)a_kb_k/\mu}-(2-a_k)}{2(1-a_k)b_k}=P.
\end{align}

With the traditional scheme, via some calculations, the following water-filling solution is obtained as the first step:
\begin{align}
&p_k=\left(\frac{a_k-2+\sqrt{a_k^2+
4w_k(1-a_k)a_kb_k/\mu}}{2(1-a_k)b_k}\right)^{+}, \nonumber \\
&{\sum}_{k=1}^K p_k = P. 
\end{align}
But algorithms to find the water-filling solution values were not explicitly provided in existing literature.

\noindent \textbf{Example 4:} A weighted mutual information maximization problem for the training design can be written in the following format \cite{Bjornson2010}:
\begin{align}
\label{Example_4}
\max_{p_1,\cdots,p_K} \ \ & {\sum}_{j=1}^J {\sum}_{k=1}^K w_{k,j}{\rm{log}}(a_kc_j+b_kd_jp_k),\nonumber \\
 \ {\rm{s.t.}} \ \ & {\sum}_{k=1}^K p_k \le P, \ \ p_k \ge 0.
\end{align}

To use the proposed scheme, we first get from the objective function
\begin{equation}
f_k'(p_k)={\sum}_{j=1}^J \frac{w_{k,j}b_kd_j}{a_kc_j+b_kd_jp_k}.
\label{ex4-1}
\end{equation}
Due to the complexity of $f_k'(p_k)$, the inverse function $g_k(\mu)$ does not have an explicit analytical form. But since the derivative function $f_k'(p_k)$ is a decreasing function, its inverse function $g_k(\mu)$ and the sum function ${\sum}_{k=1}^Kg_k(\mu){\mathcal{I}}_k$ are also monotonically decreasing with respect to $\mu$. Given the analytical expression of the derivative in (\ref{ex4-1}) and the monotonically decreasing property, the values of  $\mu=f_k'(p_k)$ and its corresponding $p_k$ can be uniquely found through two dimensional bisection search. That is, the computations in Line 6 of Algorithm 1 can be done numerically and our proposed algorithm can still work.

%

\noindent \textbf{Example 5:} A  weighted MSE minimization problem for training optimization can be formulated as follows:
\begin{align}
\max_{p_1,\cdots,p_K} \ \ -&{\sum}_{j=1}^J {\sum}_{k=1}^K \frac{w_{k,j}}{a_kc_j+b_kd_jp_k}\nonumber \\
\ \ {\rm{s.t.}} \ \ &{\sum}_{k=1}^K p_k \le P, \ \ p_k \ge 0.
\end{align}

The derivative of the objective function is
\begin{align}
f_k'(p_k)={\sum}_{j=1}^J {w_{k,j}b_kd_j}/{(a_kc_j+b_kd_jp_k)^2}.
\end{align}
Similarly, the monotonic inverse function $g_k(\mu)$ cannot be written in an explicit analytical form, but {\bf Algorithm \ref{algorithm_1}} can still be used to find the solution values by calculating $\mu$ and $p_k$'s numerically in Steps 2 and 6. With the traditional scheme, similar to Example 4, the KKT conditions can be obtained but a compact water-filling solution form has not been found with the ``+'' operation and numerical searching algorithms are needed.


\subsection{Problems with Arbitrary Lower-Bound Constraints}
In this subsection, we consider the extension of the optimization problem P1 with arbitrary lower bounds on the subchannel powers:
\begin{align}
\label{opt_lower}
{\rm P1.1:} \  \max_{p_1,\cdots,p_K} \  &{\sum}_{k=1}^K f_k(p_k) \nonumber \\
{\rm{s.t.}} \ & {\sum}_{k=1}^K p_{k} \le P,  p_k\ge \gamma_k,
\end{align}where $P>0$ and $f_k(\cdot)$'s are real-valued, increasing, and strictly concave functions with continuous derivatives. In P1.1, each subchannel is limited with a non-negative lower bound for its power, while for P1, the lower bounds are zero for all subchannels. For this more general case, define the active set $\mathcal{S}_{active}$ as the set of subchannels whose powers are higher than their lower bounds, i.e.,
\begin{align}
\mathcal{S}_{active}\triangleq\{k|p_k>\gamma_k\}.
\end{align}The following lemma is obtained.
\begin{lemma}
\label{water_1}
The following conditions are both necessary and sufficient for the optimal solution of P1.1:
\begin{align}
\left\{\begin{array}{ll} f_k'(p_k)=f_j'(p_j) &  \text{for } k,j\in\mathcal{S}_{active}; \\
f_j'(p_j=\gamma_j)\le f_k'(p_k) & \text{for } k\in\mathcal{S}_{active} \text{ and}, j\notin\mathcal{S}_{active}; \\
\sum_{k=1}^K p_k=P.
\end{array} \right.
\label{eqn-P21}
\end{align}
\label{lemma-P21}
\end{lemma}\vspace{-3mm}

\begin{proof}
The proof is very similar to that of Lemma \ref{lemma-1}, thus omitted.
\end{proof}

For the algorithm design, the index operation ${\mathcal{I}}_k$ is introduced as follows: ${\mathcal{I}}_k=1$ when the power of subchannel $k$ is larger than its lower bound, i.e., $p_k>\gamma_k$; otherwise ${\mathcal{I}}_k=0$.
 Let $\mu=f_k'(p_k)$ for $k\in \mathcal{S}_{active}$, which  is the increasing rate for active subchannels. Via similar studies to those in Section \ref{sec-2-2},  the optimal solution of P1.1 can be represented as follows:
\begin{align}
\left\{\begin{array}{l} p_k=g_k(\mu){\mathcal{I}}_k+\gamma_k(1-{\mathcal{I}}_k) \\
P={\sum}_{k=1}^K\left[g_k(\mu){\mathcal{I}}_k
+\gamma_k(1-{\mathcal{I}}_k)\right]\end{array}\right..
\label{water_c_11}
\end{align}Notice that (\ref{water_c_1}) is a special case of (\ref{water_c_11}) where $\gamma_k=0$. {\bf Algorithm \ref{alg-2}} is proposed to find the solution values for P1.1.

\begin{algorithm}[t]
\caption{Proposed algorithm under arbitrary lower-bound constraints.}
\label{alg-2}
\begin{algorithmic}[1]
\STATE $\mathcal{I}_k=1$  $\text{for}$ $k=1,\cdots,K$;
\STATE $\text{Calculate}$ $\mu$ and $p_k$'s  using  Eqn.~(\ref{water_c_11}); \WHILE{$\text{length}( \text{find}(\{p_k\} <\{\gamma_k\} ))>0$}
\STATE Find $\mathcal{S}_{inactive}=\{k|p_k \le \gamma_k\}$;
\STATE $\text{Set}$ ${\mathcal{I}}_k=0$ and $p_k=\gamma_k$ $\text{for}$ $k\in \mathcal{S}_{inactive}$;
\STATE $\text{Calculate}$ $\mu$ and ${p_k}$'s using \text{Eqn.}~(\ref{water_c_11});
\ENDWHILE
\RETURN $p_k$'s.
\end{algorithmic}
\end{algorithm}

\begin{lemma}
\textbf{Algorithm~\ref{alg-2}} converges and achieves the optimal solution of P1.1.
\label{lemma-4}
\end{lemma}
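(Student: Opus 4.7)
The plan is to mirror the two-part structure of the proof of Lemma~\ref{lemma-2}, with adjustments to account for the lower bounds $\gamma_k$ in place of the zero floor. Since Lemma~\ref{lemma-P21} has already been asserted to give necessary and sufficient conditions for optimality of P1.1, it suffices to show (i) that Algorithm~\ref{alg-2} terminates, and (ii) that the returned allocation $\{p_1,\ldots,p_K\}$ satisfies all three conditions in (\ref{eqn-P21}).

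For termination, I would argue exactly as in the proof of Lemma~\ref{lemma-2}: at every pass through the while-loop, Step 5 sets at least one index $\mathcal{I}_k$ from $1$ to $0$ (because the loop condition guarantees at least one subchannel with $p_k\le\gamma_k$), so $\mathcal{S}_{active}$ strictly shrinks each iteration. As it starts with size $K$, the algorithm halts in at most $K-1$ iterations. Moreover, since $P>0$ and the $\gamma_k$'s can be taken to satisfy $\sum_k\gamma_k\le P$ for feasibility, the active set never becomes empty.

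For optimality, the first and third conditions of (\ref{eqn-P21}) are immediate from the final execution of Step 6: by construction, $\mu$ and the $p_k$'s are jointly obtained from (\ref{water_c_11}), which enforces $f_k'(p_k)=\mu$ for $k\in\mathcal{S}_{active}$, $p_j=\gamma_j$ for $j\notin\mathcal{S}_{active}$, and $\sum_k p_k=P$. The main obstacle is the second condition, namely $f_j'(\gamma_j)\le\mu$ for every deactivated $j$. To handle this, I would track the sequence of water levels $\mu^{(1)},\mu^{(2)},\ldots,\mu^{(m)}$ produced by the successive calls to (\ref{water_c_11}). If subchannel $j$ is removed at iteration $m'$, then by definition $g_j(\mu^{(m')})<\gamma_j$; because $g_j={\rm Inv}[f_j']$ is strictly decreasing (as $f_j$ is strictly concave), this is equivalent to $\mu^{(m')}>f_j'(\gamma_j)$.

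The remaining technical point is the monotonicity claim $\mu^{(m')}\le\mu^{(m'+1)}\le\cdots\le\mu^{(m)}$. I would prove it by observing that when subchannel $j$ is frozen to $\gamma_j>g_j(\mu^{(m')})$, the budget left for the still-active subchannels strictly decreases relative to what (\ref{water_c_11}) had allocated to them at iteration $m'$; since each $g_k$ is strictly decreasing, the common value $\mu$ that redistributes a smaller total through the same (or fewer) functions can only grow. Chaining these inequalities yields $f_j'(\gamma_j)<\mu^{(m')}\le\mu^{(m)}=\mu$, which is the second condition of (\ref{eqn-P21}). Combined with Lemma~\ref{lemma-P21}, this establishes that Algorithm~\ref{alg-2} returns the optimal solution of P1.1.
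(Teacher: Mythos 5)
Your proposal is correct and follows exactly the route the paper intends: the paper omits this proof, stating only that it is ``similar to that of Lemma~\ref{lemma-2},'' and your argument is precisely that adaptation, replacing the zero floor by $\gamma_j$ (so that removal at iteration $m'$ gives $g_j(\mu^{(m')})\le\gamma_j$, i.e.\ $\mu^{(m')}\ge f_j'(\gamma_j)$) and establishing the water-level monotonicity $\mu^{(m')}\le\mu^{(m'+1)}$ from the reduced budget left to the surviving active set, just as in the paper's proof of Lemma~\ref{lemma-2}. No gaps; your explicit justification of the monotonicity step is in fact slightly more careful than the paper's own one-line assertion of it.
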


\begin{proof}
The proof is similar to that of Lemma \ref{lemma-2}, thus omitted.
\end{proof}
In each iteration of \textbf{Algorithm~\ref{alg-2}}, subchannels whose powers are less than their required lower bounds are removed from the iteration (i.e., are put in the inactive set) and their powers are enforced to be the corresponding lower bounds, i.e., $p_k=\gamma_k$.  Since these subchannels are allocated smaller powers than their lower bounds, their increasing rates at the lower bounds $\gamma_k$ are smaller than other subchannels. After being removed, fewer power resources are available for the remaining active subchannels. After power allocation among the remaining subchannels in Step 6, the powers of the active subchannels decrease, and thus their increasing rates will increase. Therefore, the removed subchannels cannot enter the competition for power in future iterations. This explains the convergence and optimality of the algorithm intuitively. The worse case complexity order of Algorithm 2 is exactly the same as that of Algorithm 1, which is ${\mathcal{O}}(K^2)$.

\subsection{Discussions on More General Cases}
The new viewpoint and method can be extended to solve more general optimization problems in wireless communications. Consider the following convex optimization problem:
\begin{align}
\label{Opt_0}
{\rm{P2}}:\  \max_{p_1,\cdots,p_K} \ \ &{\sum}_{k=1}^K f_k(p_k) \nonumber \\
 \ {\rm{s.t.}} \ \ \ &{\sum}_{k=1}^K p_k \le P, \  h_l(p_k) \le0,
\end{align}where $P>0$ and $f_k(\cdot)$'s are real-valued, increasing, and strictly concave functions with continuous derivatives.

The difference of P2 to the original one P1 is in the constraints $h_l(p_k)$'s. When P2 is convex (e.g., when $h_l(p_k)\le 0$ can be transformed to a convex constraint on $p_k$), the following
KKT conditions are necessary and sufficient for the optimal solution \cite{Boyd04}:
\begin{align}
&f_k'(p_k)=\mu+{\sum}_l\lambda_lh_l'(p_k), \ \ \mu\left({\sum}_{k=1}^K p_k - P\right)=0, \nonumber \\
& \lambda_lh_l(p_k)=0, \ \ \mu\ge0, \ \ \lambda_k \ge 0,
\end{align}where $\mu$ and $\lambda_l$'s are the Lagrange multipliers corresponding to the sum power constraint and per-subchannel constraints, respectively.

By following the ideas proposed in previous subsections, we can solve this challenging problem by considering two situations: 1) all conditions $h_l(p_k)$'s are inactive (i.e., not satisfied with equality) and 2) at least one of  $h_l(p_k)$'s is active (i.e., satisfied with equality). The first situation leads to the same solution as P1. For the second one, the results for P1 can be applied for the power allocation among subchannels with inactive $h_l(p_k)$'s and solutions for subchannels with active $h_l(p_k)$'s can be found by solving $h_l(p_k)=0$.
In the following sections, we will solve the generalized problem considering several different cases.

\noindent \textbf{Remark:} The difference between our work and  \cite{PeterHe2013}
can be summarized as the difference between geometric and algebraic viewpoints. Each cannot include the other as its special case and each has its own advantages and characteristics.
Compared with the geometric logic, our logic has less geometric meanings. On the other hand, with the algebraic viewpoint,  our method can cover more mathematical formulations and tries to give a unified way for a broad range of water-filling solutions and water-filling algorithms.

\section{Problem with Box Constraints}
\label{sect_constrained_wf}
In this section, we consider a special case of P2 in which $L=1$ and $h_l(p_k)=(p_k-\gamma_k)(p_k-\tau_k)$. Equivalently, the optimization problem is as follows:\begin{align}
\label{Opt_2}
{\rm P3:} \ \max_{p_1,\cdots,p_K} \ & {\sum}_{k=1}^K f_k(p_k) \nonumber \\
{\rm{s.t.}} & {\sum}_{k=1}^K p_{k} \le P,  \gamma_k \le p_k \le \tau_k,
\end{align}where $P>0$, $\gamma_k\le\tau_k$, and $f_k(\cdot)$'s are real-valued, increasing, and strictly concave functions with continuous derivatives. The box constraint $\gamma_k \le p_k \le \tau_k$ is of practical importance \cite{PeterHe2013,PeterHe2017,PeterHeTSG2018,PeterHeTWC2018}. It is obvious that P3 is convex.


\subsection{Two Algorithms Built on Finding Subchannel Sets}
Similar to the previous section, we can  see this problem as dividing the available power $P$ into infinitesimally small portions $\delta_p$ and allocating them portion by portion. At the start of the allocation, Subchannel $k$ must have $\gamma_k$ to satisfy the lower bound constraint. For each remaining portion, we should choose the subchannel whose $f_k(\cdot)$ has the maximum increasing rate i.e., $f_k'(p_k)$, and whose power has not reached its upper bound to maximize the total of $f_k(\cdot)$'s. As the increasing rate of $f_k$ decreases when a power portion is added to it, after adding a portion to the subchannel with the maximum increasing rate, e.g., Subchannel $k$, its increasing rate may become smaller than the rate of another subchannel. In this case, a new subchannel $j$ with the fastest increasing rate will have the next power portion. Otherwise, Subchannel $k$ gets the next power portion if it still has the maximum $f_k'(p_k+\delta_p)$. This procedure repeats until all power portions have been allocated.
Some subchannels may never get any extra power portion than the original lower bounds when their increasing rates are never the highest. Some subchannels may have the highest increasing rates but cannot get more power due to their upper bound constraints. When the allocation stops, subchannels which do not have active bounds must have the same increasing rate.

For a given feasible solution $\{p_1,\cdots,p_K\}$, denote
\begin{align}
&\mathcal{S}_{l}\triangleq\{k|p_k=\gamma_k\}, \ \
S_{u}\triangleq\{k|p_k=\tau_k\}, \nonumber\\
&\mathcal{S}_{active}\triangleq\{k|\gamma_k< p_k<\tau_k\},
\label{S-III}
\end{align}which are the index sets of subchannels whose power values equal their lower bounds, upper bounds, and in-between the two bounds (i.e., active subchannels), respectively. They are also sets of subchannels with active lower bounds, active upper bounds, and no active bounds. The following lemma provides the sufficient and necessary condition on the optimal solution of P3. In our work, for the optimization problem P3 we mainly focus on the case that the sum power constraint is active as otherwise the optimization becomes very trivial. Specifically, when $\sum_{k=1}^K\tau_k < P$, the optimal solution is $p_k=\tau_k$. In this case, this is no need to design algorithms to solve P3. It is also worth highlighting that the following proposed algorithms can accommodate this trivial case directly.
\begin{lemma}
The following conditions are both necessary and sufficient for the optimal solution of P3:
\begin{align}
\left\{\begin{array}{ll} f_k'(p_k)=f_j'(p_j) & \text{for } k,j\in\mathcal{S}_{active}; \\
f_j'(p_j=\gamma_j)\le f_k'(p_k) & \text{for } k\in\mathcal{S}_{active} \text{ and } j\in\mathcal{S}_{l}; \\
f_j'(p_j=\tau_j)\ge f_k'(p_k) & \text{for } k\in\mathcal{S}_{active} \text{ and } j\in\mathcal{S}_{u}; \\
\sum_{k=1}^K p_k=\widetilde{P} & \widetilde{P}=\min\{P,\sum_{k=1}^K\tau_k\}.
\end{array} \right.
\label{cond-2}
\end{align}
\label{lemma-5}
\end{lemma}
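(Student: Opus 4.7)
The plan is to mirror the structure of the proof of Lemma \ref{lemma-1}, with the main novelty being that now there are two sides of bound constraints to handle. For necessity, I would argue by contradiction using small local perturbations, and for sufficiency I would show that any feasible point satisfying (\ref{cond-2}) is a local maximum of a convex program, hence the unique global maximum.

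For the necessity part, let $\{p_1^*,\ldots,p_K^*\}$ be optimal and let $\mathcal{S}_{active}^*,\mathcal{S}_l^*,\mathcal{S}_u^*$ be the sets in (\ref{S-III}) evaluated at the optimum. The fourth condition follows because each $f_k$ is strictly increasing. For the first condition, if $j,k\in\mathcal{S}_{active}^*$ but $f_k'(p_k^*)>f_j'(p_j^*)$, then continuity of the derivatives and the strict separation from the bounds at both $p_k^*$ and $p_j^*$ let me pick a small $\delta>0$ such that the perturbation $p_k^*+\delta, p_j^*-\delta$ remains feasible and strictly improves the objective, by the same integral comparison used in Lemma \ref{lemma-1}. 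For the second condition, if $k\in\mathcal{S}_{active}^*$, $j\in\mathcal{S}_l^*$ and $f_j'(\gamma_j)>f_k'(p_k^*)$, moving a small $\delta$ of power from $k$ to $j$ (raising $p_j$ slightly above $\gamma_j$, still below $\tau_j$) gives a feasible strict improvement, again by the integral argument. The third condition is proved symmetrically by moving power from an upper-bound-saturated channel $j\in\mathcal{S}_u^*$ to an active channel $k$ when $f_j'(\tau_j)<f_k'(p_k^*)$.

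For the sufficiency part, I would adapt the local-maximum argument in Lemma \ref{lemma-1}. Let $\mu$ denote the common value of $f_k'(p_k^*)$ on $\mathcal{S}_{active}^*$ (pick any anchor value if $\mathcal{S}_{active}^*$ is empty, using the two inequalities in (\ref{cond-2}) to define $\mu$ consistently). Take any feasible $\{p_1,\ldots,p_K\}$ near $\{p_k^*\}$ and define $\mathcal{S}_+\triangleq\{k:p_k>p_k^*\}$ and $\mathcal{S}_-\triangleq\{k:p_k<p_k^*\}$. The key observation is that feasibility forces $\mathcal{S}_+\cap\mathcal{S}_u^*=\emptyset$ and $\mathcal{S}_-\cap\mathcal{S}_l^*=\emptyset$. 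Consequently, for $k\in\mathcal{S}_+$ I have $k\in\mathcal{S}_{active}^*\cup\mathcal{S}_l^*$, and in either case $f_k'(p_k^*)\le\mu$; by strict concavity, $f_k'(p_k^*+x)<\mu$ for $0<x\le p_k-p_k^*$. Symmetrically, for $\hat k\in\mathcal{S}_-$ I have $\hat k\in\mathcal{S}_{active}^*\cup\mathcal{S}_u^*$, and in either case $f_{\hat k}'(p_{\hat k}^*)\ge\mu$, so $f_{\hat k}'(p_{\hat k}^*-\hat x)>\mu$. Combined with $\sum_{k\in\mathcal{S}_+}(p_k-p_k^*)\le\sum_{\hat k\in\mathcal{S}_-}(p_{\hat k}^*-p_{\hat k})$ (which follows from $\sum_k p_k\le P=\sum_k p_k^*$), the same integral decomposition as in Lemma \ref{lemma-1} yields $\sum_k f_k(p_k)<\sum_k f_k(p_k^*)$. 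Since P3 is convex with a convex feasible set, local optimality is global optimality.

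The main obstacle I anticipate is the careful bookkeeping of which bound is active for each channel in the sufficiency step: unlike Lemma \ref{lemma-1}, where only the nonnegativity bound could bind, here both $\mathcal{S}_l^*$ and $\mathcal{S}_u^*$ can appear inside $\mathcal{S}_+$ or $\mathcal{S}_-$ respectively, and I need the two inequalities of (\ref{cond-2}) (rather than a single one) to reach the strict-concavity chain $f_k'(p_k^*+x)<\mu<f_{\hat k}'(p_{\hat k}^*-\hat x)$. Everything else is a direct replay of the earlier argument.
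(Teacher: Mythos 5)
Your proposal is correct and follows exactly the route the paper takes: the paper's own proof of Lemma~\ref{lemma-5} simply states that it is ``similar to that of Lemma~\ref{lemma-1}'' with the lower bounds shifted from $0$ to $\gamma_k$ and upper bounds added, and your necessity-by-perturbation and sufficiency-via-integral-comparison arguments, together with the key observations $\mathcal{S}_+\cap\mathcal{S}_u^*=\emptyset$ and $\mathcal{S}_-\cap\mathcal{S}_l^*=\emptyset$, are precisely the omitted details. The one soft spot is your parenthetical treatment of $\mathcal{S}_{active}^*=\emptyset$: in that case the two inequalities of (\ref{cond-2}) become vacuous and do not pin down a consistent anchor $\mu$, and sufficiency can genuinely fail there (e.g., $K=2$ with identical $f_k$, $\gamma_k=0$, $\tau_k=P$, and $p_1=P$, $p_2=0$ satisfies (\ref{cond-2}) but is not optimal) --- a degenerate case that the paper's statement and omitted proof also overlook.
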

\begin{proof}
The proof is similar to that of Lemma \ref{lemma-1} with the following two changes: 1) the lower bounds change from 0 to $\gamma_k$'s and 2) new upper bounds are added. Details are omitted to save space.
\end{proof}

The physical meaning of (\ref{cond-2}) is as follows. At the optimal solution, subchannels with inactive bounds have the same increasing rate $f_k'(p_k)$, which is also denoted as $\mu$. Subchannels with active lower bounds have lower increasing rates than $\mu$ and subchannels with active upper bounds have higher increasing rates than $\mu$.

\begin{algorithm}[t]
\caption{The first proposed algorithm for P3.}
\label{alg-P3-a}
\begin{algorithmic}[1]
\STATE $\text{Perform}$ $\textbf{Algorithm~\ref{alg-2}}$;
\WHILE{$\text{length}( \text{find}(\{p_k\} > \{\tau_k\} ))>0$}
\STATE Find $\mathcal{S}_u=\{k|p_k \ge \tau_k\}$ and $\text{set}$ $p_k=\tau_k$ $\text{for}$ $k\in \mathcal{S}_u$;
\STATE Find $\mathcal{S}_{other}=\{k|p_k < \tau_k\}$;

\STATE  $\text{Update}$ $P\leftarrow P-\sum_{k\in \mathcal{S}_u}\tau_k$;
\STATE  $\text{Perform}$ $\textbf{Algorithm~\ref{alg-2}}$ for subchannels in  $\mathcal{S}_{other}$ with the updated total power $P$;
\ENDWHILE
\RETURN $p_k$'s.
\end{algorithmic}
\end{algorithm}

Based on the viewpoint and conditions for the optimal solution of P3, we propose {\bf Algorithm \ref{alg-P3-a}} to find the solution values by using Algorithm \ref{alg-2} as a building block. The idea is to first consider the lower bound constraints only and use Algorithm \ref{alg-2} to find the corresponding solution. Then the subchannels whose power values are larger or the same as their upper bounds are re-set as their upper bounds, and are removed from the set of active subchannels. In the next iteration, power is allocated among the remaining active subchannels using Algorithm \ref{alg-2} again. The process continues until the powers of all subchanels are smaller than their upper bounds at an iteration. Algorithm \ref{alg-P3-a} has one more round of iteration than Algorithm \ref{alg-2}. Thus its worse case complexity order is $\mathcal{O}(K^3)$. Algorithm \ref{alg-P3-a} does not have balanced treatment between the lower bound constraints and the upper bound constraints. While subchannels that reach or violate their upper bound constraints are removed during the iterations, the ones reaching or violating their lower constraints stay in the `while' loop and participate in the power allocation procedure with Algorithm \ref{alg-2}. Another algorithm symmetrical to Algorithm \ref{alg-P3-a} can also be designed by switching the roles of the lower and upper bound constraints.

\begin{algorithm}[t]
\caption{The second proposed balanced algorithm for P3.}
\label{alg-P3-b}
\begin{algorithmic}[1]
\STATE Initialize $\mathcal{I}_k=1$ and $\mathcal{J}_k=1$ $\text{for}$ $k=1,\cdots,K$;
\STATE $\text{Calculate}$ $\mu$ and $p_k$'s $ \text{using} $  Eqn.~(\ref{water_c_P3-32});
\WHILE{$\text{length}(\text{find}(\{p_k\}\! <\! \{\gamma_k\} ))\!+\!\text{length}( \text{find}(\{p_k\} \!>\! \{\tau_k\} ))\!\! >\!\! 0$}
\STATE Find $\mathcal{S}_{l}=\{k|p_k\le\gamma_k\}$ and set $p_k=\gamma_k$, $\mathcal{I}_k=0$ for $k\in \mathcal{S}_l$;
\STATE  $\text{Calculate}$ $\mu$ and ${p_k}$'s using Eqn.~(\ref{water_c_P3-32});
\IF{$\text{length} (\text{find}(\{p_k\}\!<\!\{\gamma_k\}))=0$ $\&$  $\text{length}( \text{find}(\{p_k\} \!>\! \{\tau_k\} ))>0$}
\STATE Find $\mathcal{S}_{u}=\{k|p_k\ge\tau_k\}$ and $\text{set}$ $p_k=\tau_k$, $\mathcal{J}_k=0$ for $k\in \mathcal{S}_u$;
\STATE Set $\mathcal{I}_k=1$  $\text{for}$ $k=1,\cdots,K$;
\STATE  $\text{Calculate}$ $\mu$ and ${p_k}$'s using Eqn.~(\ref{water_c_P3-32});
\ENDIF
\ENDWHILE
\RETURN $p_k$'s
\end{algorithmic}
\end{algorithm}Next, we consider both constraints jointly. Based on the aforementioned discussions, the key task is to determine the sets $\mathcal{S}_u$, $\mathcal{S}_l$, and $\mathcal{S}_{active}$ defined in (\ref{S-III}). We  introduce two sets of indices $\mathcal{I}_k$'s and $\mathcal{J}_k$'s as follows:
\begin{align}
\mathcal{I}_k=\left\{\begin{array}{ll}1 & \text{ if }  p_k>\gamma_k \\
0 & \text{otherwise}
\end{array}\right.,\
\mathcal{J}_k=\left\{\begin{array}{ll} 1 & \text{ if }  p_k<\tau_k \\
0 & \text{otherwise}
\end{array}\right.,
\end{align}where $p_k$ is the power allocated to subchannel $k$,
$\mathcal{I}_k$ indicates whether the power of subchannel $k$ is larger than its lower bound constraint and $\mathcal{J}_k$ indicates whether it is smaller than its upper bound constraint. For the index tuple $(\mathcal{I}_k,\mathcal{J}_k)$, $(1,1)$ means the subchannel is an active one and neither constraints is tight; $(1,0)$ means the subchannel belongs to $\mathcal{S}_u$; and $(0,1)$ means the subchannel belongs to $\mathcal{S}_l$. 
Similar to the previous section, let $\mu$ be the increasing rate of the active subchannels, and we have the following necessary conditions for P3 from (\ref{cond-2}): \begin{align}
\hspace{-3mm}\left\{\hspace{-1mm}\begin{array}{l} p_k=g_k(\mu)\mathcal{I}_k\mathcal{J}_k+\gamma_k(1-{\mathcal{I}}_k)+\tau_k(1-{\mathcal{J}}_k), \\
P={\sum}_{k=1}^K\left[g_k(\mu){\mathcal{I}}_k\mathcal{J}_k+\gamma_k(1-{\mathcal{I}}_k)+\tau_k(1-{\mathcal{J}}_k)
\right]\end{array}\right.\hspace{-2mm}.
\label{water_c_P3-32}
\end{align}Algorithm \ref{alg-P3-b} is proposed which follows the idea in Algorithm \ref{alg-2} with extensions for both lower and upper bound constraints. The worse case complexity order of Algorithm \ref{alg-P3-b} is the same as that of Algorithm \ref{alg-2}, i.e., ${\mathcal{O}}({K^2})$. By using results in Lemma \ref{lemma-5} and following the proof in Lemma \ref{lemma-2}, the convergence and optimality of Algorithms \ref{alg-P3-a} and \ref{alg-P3-b} can be proved.
\begin{lemma}
\textbf{Algorithms~\ref{alg-P3-a}} and \textbf{\ref{alg-P3-b}} converge and achieve the optimal solution of P3.
\label{lemma-6}
\end{lemma}
\begin{proof}
The detailed proof is similar to that of Lemma \ref{lemma-2} and is thus omitted to save space.
\end{proof}

\subsection{Two Algorithms Built on Finding the Final Increasing Rate}

\begin{algorithm}[t]
\caption{The third proposed bisection search based algorithm for P3.}
\label{alg-P3-c}
\begin{algorithmic}[1]
\STATE Initialize $\phi_{P}$, $\mu_{\min}$, $\mu_{\max}$ and  $\mathcal{I}_k=\mathcal{J}_k=1$ ;
\IF {$\sum_{k=1}^K\tau_k<P$}
\STATE Let $\mu=(\mu_{\min}+\mu_{\max})/2$ and calculate $p_k$'s using the first formula of Eqn.~(\ref{water_c_P3-32});
\STATE Find $\mathcal{S}_u=\{k|p_k > \tau_k\}$ and $\text{set}$ $p_k=\tau_k$ $\text{for}$ $k\in \mathcal{S}_u$;
\STATE Find $\mathcal{S}_l=\{k|p_k < \tau_k\}$ and $\text{set}$ $p_k=\gamma_k$ $\text{for}$ $k\in \mathcal{S}_l$;
\WHILE{$|\sum_{k=1}^Kp_k- P|>\phi_{P}$ }
\IF {$\sum_{k=1}^Kp_k<P$}
\STATE $\mu_{\max}=(\mu_{\min}+\mu_{\max})/2$;
\ELSE
\STATE $\mu_{\min}=(\mu_{\min}+\mu_{\max})/2$;
\ENDIF
\STATE Let $\mu=(\mu_{\min}+\mu_{\max})/2$ and calculate $p_k$'s  using the first formula of Eqn.~(\ref{water_c_P3-32}).
\STATE Find $\mathcal{S}_u=\{k|p_k > \tau_k\}$ and $\text{set}$ $p_k=\tau_k$ $\text{for}$ $k\in \mathcal{S}_u$;
\STATE Find $\mathcal{S}_l=\{k|p_k < \tau_k\}$ and $\text{set}$ $p_k=\gamma_k$ $\text{for}$ $k\in \mathcal{S}_l$;
\ENDWHILE
\ELSE
\STATE Set $p_k=\tau_k$ for $k=1,\cdots,K$;
\ENDIF
\RETURN $p_{k}$'s.
\end{algorithmic}
\end{algorithm}

In this subsection, two new algorithms are proposed, which are constructed by finding the final increasing rate $\mu$ of active subchannels. We first propose a complex but general one in \textbf{Algorithms~\ref{alg-P3-c}}, where bisection search is used.  It is easy to understand and implement, but suffers high complexity and numerical accuracy limitations. The complexity of Algorithm~\ref{alg-P3-c} is still ${\mathcal{O}}(K^2)$.\begin{algorithm}[t]
\caption{The fourth proposed order-based algorithm for P3.}
\label{alg-P3-d}
\begin{algorithmic}[1]
\STATE $\text{Order}$ $f_k'({\tau_k})$'s decreasingly, i.e., via (\ref{order-1}), to obtain the index sequence $\{\sigma_1,\cdots\sigma_K\}$;
\STATE Set $i=1$ and $\tau_{\sigma_0}=0$;
\STATE Calculate  $\mu= f_{\sigma_i}'({\tau_{\sigma_i}})$;
\STATE Calculate $p_k$ from (\ref{power-Alg-P3-d}) with $\mu$;
\WHILE{$\sum_{k=1}^K p_k<P$ $\&$ $i< K$}
\STATE Update $i\leftarrow i+1$;
\STATE Calculate $\mu= f_{\sigma_i}'({\tau_{\sigma_i}})$;
\STATE Calculate $p_k$ $\text{from}$ (\ref{power-Alg-P3-d}) with $\mu$;

\ENDWHILE
\IF {$\sum_{k=1}^K p_k>P$}
\STATE Perform \textbf{Algorithm \ref{alg-2}} on subchannels $\{\sigma_k| i \le k \le K \}$ with the total power being $P-\sum_{n=0}^{i-1}\tau_{\sigma_n}$.
\ENDIF
\RETURN $p_k$'s.
\end{algorithmic}
\end{algorithm} The next algorithm, \textbf{Algorithm \ref{alg-P3-d}}, uses a more efficient method to find the final increasing rate $\mu$ whose complexity order is also ${\mathcal{O}}(K^2)$. First, the subchannels are ordered decreasingly based on their increasing rates at the power upper bounds $f_k'(\tau_k)$'s such that
\begin{equation}
f_{\sigma_1}'(\tau_{\sigma_1})\ge \cdots\ge f_{\sigma_K}'(\tau_{\sigma_K}).
\label{order-1}
\end{equation}From the results in Lemma \ref{lemma-5}, it can be shown that the subchannel with a higher $f_k'(\tau_k)$ has higher priority to achieve its upper bound. In other words, if at the optimal solution $p_{\sigma_i}=\tau_{\sigma_i}$, then $p_{\sigma_j}=\tau_{\sigma_j}$ for all $j<i$. Thus, in finding the optimal solution, we can consider the cases of $p_{\sigma_{[1:i]}}=\tau_{\sigma_{[1:i]}}$ and $p_{\sigma_{[i+1:K]}}<\tau_{\sigma_{[i+1:K]}}$ for $i=1,\cdots,K-1$, sequentially starting with $i=1$. That is, the $i$th case corresponds to $\mathcal{S}_u=\{\sigma_1,\sigma_2,\cdots,\sigma_i\}$. Notice that the $i$th case happens if and only if $\mu\in(f_{\sigma_i}'(\tau_{\sigma_i}), f_{\sigma_{i-1}}'(\tau_{\sigma_{i-1}})]$, where we define $f_{\sigma_0}'(\tau_{\sigma_0})=\infty$. Thus this is equivalent to considering that $\mu$ is in the intervals $(f_{\sigma_i}'(\tau_{\sigma_i}), f_{\sigma_{i-1}}'(\tau_{\sigma_{i-1}})]$ for $i=1,2,\cdots,K$, sequentially to decide the correct $\mu$ interval.

In dealing with the $i$th case, let $\mu=f_{\sigma_i}'(\tau_{\sigma_i})$, and based on Lemma 3 the power for each subchannel is given by
\begin{equation}
\left\{\begin{array}{ll}
p_{\sigma_k}=\tau_{\sigma_k} & \text{if} \ k\le i, \\
p_{\sigma_k}=g_{\sigma_k}(\mu) &  \text{if} \ k>i,f_{\sigma_k}'(\gamma_k)> \mu \\
P_{\sigma_k}=\gamma_k & \text{if} \ k>i, f_{\sigma_k}'(\gamma_k)\le \mu.
\end{array}\right.
\label{power-Alg-P3-d}
\end{equation}Then the total power is calculated and compared with the power constraint $P$. If $\sum_{k=1}^Kp_k> P$, none of the subchannels $\sigma_{i},\cdots,\sigma_{K}$ can reach its upper bound. Thus the optimal solution of P3 falls into this case. As for this case, subchannels $\sigma_{i},\cdots,\sigma_{K}$ have inactive upper bounds, the bounds can be ignored and \textbf{Algorithm~\ref{alg-2}} can be used to find the optimal values of their powers. If $\sum_{k=1}^Kp_k<P$, the increasing rate $\mu=f_{\sigma_i}'(\tau_{\sigma_i})$ is too high for all $p_{\sigma_{i+1}},\cdots,p_{\sigma_{K}}$ to stay below their upper bounds.
As a result, Case $i$ is not the optimal and the next case should be considered. Note that when $\sum_{k=1}^Kp_k=P$, Case $i$ is optimal. If the last case, Case $K$ is considered, and still $\sum_{k=1}^Kp_k< P$, this means $P>\sum_{i=1}^K\tau_k$ and all subchannels should use their maximum powers. With the above discussions and Lemma \ref{lemma-5}, the following lemma can be proved.
\begin{lemma}
\textbf{Algorithms \ref{alg-P3-c}} and \textbf{\ref{alg-P3-d}} converge and achieve the optimal solution of P3.
\end{lemma}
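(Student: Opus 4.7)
The plan is to invoke Lemma \ref{lemma-5} as the optimality certificate: any feasible solution satisfying the four conditions in (\ref{cond-2}) is optimal, so it suffices to verify that each algorithm terminates with such a certificate.

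For Algorithm \ref{alg-P3-c}, I would first observe that the clipped power $p_k(\mu)\triangleq\min\{\tau_k,\max\{\gamma_k,g_k(\mu)\}\}$ is continuous and non-increasing in $\mu$, since $g_k$ is the inverse of the strictly decreasing function $f_k'$. Hence $S(\mu)\triangleq\sum_k p_k(\mu)$ is continuous and non-increasing, and $\mu_{\min},\mu_{\max}$ can be chosen so that $S(\mu_{\min})\ge P\ge S(\mu_{\max})$. Standard bisection analysis then guarantees geometric convergence to a root $\mu^\star$ with $|S(\mu^\star)-P|\le\sigma$. Feasibility of the returned $p_k$'s is built in via the clipping in Steps 3, 4, 12 and 13. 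Optimality via Lemma \ref{lemma-5} follows by verification at $\mu^\star$: if $k\in\mathcal{S}_{active}$ then $p_k=g_k(\mu^\star)$, so $f_k'(p_k)=\mu^\star$; if $j\in\mathcal{S}_l$ the lower clip triggered precisely because $g_j(\mu^\star)\le\gamma_j$, which is equivalent by monotonicity of $f_j'$ to $f_j'(\gamma_j)\le\mu^\star=f_k'(p_k)$ for any $k\in\mathcal{S}_{active}$; the dual argument handles $\mathcal{S}_u$.

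For Algorithm \ref{alg-P3-d}, the core structural claim I would establish first is a monotonicity-of-saturation property at the optimum: if $p_{\sigma_i}^\star=\tau_{\sigma_i}$ then $p_{\sigma_j}^\star=\tau_{\sigma_j}$ for every $j<i$. The ordering (\ref{order-1}) together with the third line of (\ref{cond-2}) gives $f_{\sigma_j}'(\tau_{\sigma_j})\ge f_{\sigma_i}'(\tau_{\sigma_i})\ge\mu^\star$; strict monotonicity of $f_{\sigma_j}'$ then rules out $\sigma_j\in\mathcal{S}_{active}$ (which would force $\mu^\star=f_{\sigma_j}'(p_{\sigma_j}^\star)>f_{\sigma_j}'(\tau_{\sigma_j})$), and under the non-degenerate assumption $\gamma_{\sigma_j}<\tau_{\sigma_j}$ also rules out $\sigma_j\in\mathcal{S}_l$. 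Consequently $\mathcal{S}_u^\star$ is a prefix of the ordered list, so only $K$ candidate prefixes need be examined. With that reduction, I would show that the loop in Steps 4--9 correctly identifies the right prefix by exploiting monotonicity of the trial total $\sum_k p_k$ from (\ref{power-Alg-P3-d}) in $i$: moving from $i$ to $i+1$ raises the contribution of $\sigma_{i+1}$ from $g_{\sigma_{i+1}}(f_{\sigma_i}'(\tau_{\sigma_i}))\le\tau_{\sigma_{i+1}}$ to $\tau_{\sigma_{i+1}}$, and the decrease of $\mu$ from $f_{\sigma_i}'(\tau_{\sigma_i})$ to $f_{\sigma_{i+1}}'(\tau_{\sigma_{i+1}})$ only raises $g_{\sigma_k}(\mu)$ for $k>i+1$. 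Hence the first $i$ at which $\sum_k p_k\ge P$ brackets the optimal water-level interval. Once this prefix is fixed, the upper-bound constraints on $\{\sigma_{i+1},\ldots,\sigma_K\}$ are inactive at the optimum, so Algorithm \ref{alg-2} applied to these subchannels with residual budget $P-\sum_{k\le i}\tau_{\sigma_k}$ returns the correct active/lower-bound split by Lemma \ref{lemma-4}. The boundary case $i=K$ (i.e., $P\ge\sum_k\tau_k$) is handled by the final conditional.

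The main obstacle will be the rigorous proof of the saturation-ordering property above, particularly handling ties in (\ref{order-1}) and degenerate subchannels with $\gamma_k=\tau_k$; everything else reduces to mechanically checking the certificate (\ref{cond-2}) at the terminal iterate and invoking Lemma \ref{lemma-4} as a black box for the residual sub-problem in Algorithm \ref{alg-P3-d}.
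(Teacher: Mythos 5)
Your proposal is correct and follows essentially the same route as the paper, which offers no formal proof beyond the remark that the lemma follows ``with the above discussions and Lemma~\ref{lemma-5}'': bisection on the monotone clipped power sum for Algorithm~\ref{alg-P3-c}, and the prefix/saturation-ordering of $\mathcal{S}_u$ under the ordering in (\ref{order-1}) for Algorithm~\ref{alg-P3-d}, with Lemma~\ref{lemma-4} disposing of the residual subproblem. Your write-up is in fact more rigorous than the paper's own treatment, since you actually prove the saturation-ordering claim that the paper merely asserts and you explicitly flag the tie and degenerate-box cases it ignores.
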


\subsection{Application Examples}
In this subsection, a few application examples are given.

\noindent \textbf{Example 6:} A weighted capacity maximization problem under box constraints can be formulated as follows:
\begin{align}
 \max_{p_1,\cdots,p_K} \  & {\sum}_{k=1}^Kw_k{\rm{log}}(b_k+a_kp_k) \nonumber \\
 {\rm{s.t.}} \  &{\sum}_{k=1}^K p_k \le P, \ \gamma_k \le p_k \le \tau_k.
\end{align}

\noindent \textbf{Example 7:} A weighted MSE minimization problem for MIMO-OFDM systems under box constraints is formulated as
\begin{align}
 \max_{\{p_{j,k}\}} \ & {\sum}_{j=1}^J{\sum}_{k=1}^K-{w_{j,k}}/({b_{j,k}+a_{j,k}p_{j,k}}) \nonumber \\
 \ {\rm{s.t.}} \ &{\sum}_{j=1}^J{\sum}_{k=1}^K p_{j,k} \le P, \ \gamma_{j,k} \le p_{j,k}\le \tau_{j,k}.
\end{align}

\noindent \textbf{Example 8:} The weighted capacity maximization problem for AF MIMO relaying systems can be written in the following form\begin{align}
\max_{p_1,\cdots,p_K} \  & {\sum}_{k=1}^K-w_k{\rm{log}}\left(1-{a_kb_kp_k}/({1+b_kp_k})\right) \nonumber \\
  {\rm{s.t.}} \  &{\sum}_{k=1}^K p_k \le P,  \ \gamma_k \le p_k\le \tau_k.
\end{align}

For the problems in Examples 6-8, the proposed algorithms in Algorithms \ref{alg-P3-a}-\ref{alg-P3-d} can be used to find the optimal solutions.
%

\section{Several Extensions}
\label{sec-extension}
In this section, the proposed viewpoint and algorithms are extended to several more complicated optimization problems. 

\subsection{Problems with Multiple Ascending Sum-Constraints}

We first investigate the extension of P3 to have multiple ascending sum-constraints \cite{Antonio2014}. The optimization problem is posed as follows:
\begin{align}
{\rm P4}:\!\!\! \max_{p_1,\cdots,p_K} \  & {\sum}_{k=1}^K f_k(p_k) \nonumber \\
 \ {\rm{s.t.}} \  & {\sum}_{k=1}^J p_{k} \le P_J, \ J=1,\cdots,K \nonumber \\
 \  & \gamma_k \le p_k \le \tau_k, \ \ k=1,\cdots,K,
\label{Cond-41}
\end{align}where $f_k$'s have the same properties as in P3. The main difference to P3 is that P4 has a total of $K$ constraints on the ascending accumulative sums, while P3 has one total sum-constraint. Thus the feasible region of P3 is no smaller than that of P4. A new approach different from that in \cite{Antonio2014} is proposed here, where the algorithms proposed for P3 in the previous section are used as building blocks.

Without loss of generality, for the first set of constraints in (\ref{Cond-41}), the index set of active constraints is denoted as $\{J_{1}^*,\cdots J_{N}^*\}$, i.e., $\sum_{k=1}^{J_n^{*}}p_{k}=P_{J_{n}^*}$ for $1,\cdots,N$ while $\sum_{k=1}^J p_k<P_n$ for $J\ne J_1^*,\cdots, J_N^*$. For completeness, we define $J_0^*=1$  $P_{0}=0$ and further define the subchannel set ${\mathcal C}_{n}=\{k|J_{n-1}^* < k  \le J_{n}^* \}$ for $n=1,\cdots,N$. For each ${\mathcal C}_{n}$, three subsets are defined, i.e.,  $S_{n,active}=\{k|\gamma_k<p_k<\tau_k,k\in{\mathcal C}_{n}\}$, $S_{n,l}=\{k|p_k=\gamma_k,k\in {\mathcal C}_{n}\}$ and $S_{n,u}=\{k|p_k=\tau_k,k\in {\mathcal C}_{n}\}$. It can be seen that ${\mathcal C}_{1},\cdots,{\mathcal C}_{N}$ form a partition of $\{1,\cdots, J_N^*\}$ and $S_{n,active},S_{n,l},S_{n,u}$ form a partition of $\mathcal{C}_{n}$.  Based on these definitions and Lemma 5, we have the necessary and sufficient conditions for the solution of P4 in the following lemma.
\begin{algorithm}[t]
\caption{Proposed recursive nested algorithm for P4.}
\label{alg-P4}
\begin{algorithmic}[1]
\STATE Initialize $P_0=0$ and $N_{\min}=1$; 
\WHILE{$N_{\min} < K$}
\STATE Use one of $\textbf{Algorithms~\ref{alg-P3-a}-\ref{alg-P3-d}}$ to calculate $p_{k,J}$'s over $\{N_{\min},\cdots,J\}$ for $J=N_{\min},\cdots,K$ under the sum-constraint ${\sum}_{k=N_{\min}}^J p_{k} \le P_J-P_{N_{\min}-1}$ and the box constraints;
\STATE Calculate the increasing rates $f'^{(J)}$'s of the active subchannels for $J=N_{\min},\cdots,K$ and set $f'^{(J)}=0$ when there is no active subchannel for $J$;
\STATE Find $J_{\min}$ for which the increasing rate $f'^{(J_{\min})}$ has the maximum value; and if there are multiple values having the same maximum increasing rate, $J_{\min}$ is the maximum value;
\STATE Set $p_{k}=p_{k,J_{\min}}$ for $k=N_{\min},\cdots,J_{\min}$;
\STATE Set $N_{\min}=J_{\min}+1$;
\ENDWHILE
\RETURN $p_k$'s.
\end{algorithmic}
\end{algorithm}
\begin{lemma}
The following conditions are both necessary and sufficient for the optimal solution of P5:
\begin{align}
\label{Con-P5}
 \hspace{-2mm}\left\{\begin{array}{ll} f_k'(p_k)=f_j'(p_j), \ \text{for } k,j\in {\mathcal{S}}_{n,active}; \\
f_j'(p_j=\gamma_j)\le f_k'(p_k), \ \text{for } k\in\mathcal{S}_{n,active}, j\in\mathcal{S}_{n,l}; \\
f_j'(p_j=\tau_j)\ge f_k'(p_k),\  \text{for } k\in\mathcal{S}_{n,active}, j\in\mathcal{S}_{n,u}; \\
\sum_{k=J_{n-1}^*+1}^{J_{n}^*}p_{k}=P_{J_{n}^*}\hspace{-1mm} -\hspace{-1mm} P_{J_{n-1}^*}, \ \text{for}\ n=1,\cdots,N\\
\sum_{k=1}^{J}p_{k}=\min\{P_J,P_{J_{N}^*}+\sum_{k=J_N^*+1}^J\tau_{k}\}, J_N^*<J\le K,  \\
f_{k_1}'(p_{k_1}) \ge f_{k_2}'(p_{k_2}) \ge \hspace{-1mm}\cdots \hspace{-1mm} \ge f_{k_N}'(p_{k_N}), \text{for} \ k_n\in \mathcal{S}_{n,active}.
\end{array} \right.
\end{align}
\label{Lemma-8}
\end{lemma}

\begin{proof}
It can be seen that the power optimization for the subchannel set ${\mathcal C}_{n}$ in  P4 becomes P3 with $P=P_{J_{n}^*}-P_{J_{n-1}^*}$ as otherwise based on the discussions for Lemma 5, the sum performance for subchannels in ${\mathcal C}_{n}$ can be further improved without violating the constraints in P4. In other words, the optimal solutions of $p_k$ in each ${\mathcal C}_{n}$ are given by Lemma 5 with $P=P_{J_{n}^*}-P_{J_{n-1}^*}$. This shows the necessity of the conditions in the first four lines of (\ref{Con-P5}) for the optimal solution. The necessity of the second last condition in (\ref{Con-P5}) can be seen as follows. From the definition of $J_N^*$, for $J=J_N^*+1$, the sum-power constraint is not tight meaning $\sum_{k=1}^{J_N^*+1} p_k<P_{J_N^*+1}$ while $\sum_{k=1}^{J_N^*} p_k=P_{J_N^*}$. This happens only when $P_{J_{N}^*+1} > P_{J_N^*}+\tau_{J_N^*+1}$. Thus the optimal solution for $p_{J_N^*+1}$ is to take its upper bound $\tau_{J_N^*+1}$. The same goes to all $J_N^*< J \le K$.

Finally, for the last condition in (\ref{Con-P5}), we prove the necessity of the first part: $f_{k_1}'(p_{k_1})> f_{k_2}'(p_{k_2})$ for $k_1\in \mathcal{S}_{1,active},k_2\in \mathcal{S}_{2,active}$, via contradiction. The remaining part of the inequality can be proved similarly. Assume that $p_1^*,\cdots,p_K^*$ is the optimal solution but $f_{k_1}'(p^*_{k_1})< f_{k_2}'(p^*_{k_2})$ for $k_1\in \mathcal{S}_{1,active},k_2\in \mathcal{S}_{2,active}$. Since $f_{k_1}'$ and $f_{k_2}'$ are continuous, there exists a $\delta$ with $0<p_{k_1}^*-\gamma_{k_1}<\delta<\tau_{k_2}-p_{k_2}^*$ such that $f_{k_1}'(p^*_{k_1}-x)<f_{k_2}'(p^*_{k_2}+x)$ for $0< x\le \delta$. Thus
\begin{align}
&f_{k_1}(p^*_{k_1}-\delta)+f_{k_2}(p^*_{k_2}+\delta) \nonumber  \\
&=f_{k_1}(p^*_{k_1})+f_{k_1}(p^*_{k_1})+\hspace{-2mm}\int_{0}^{\delta}\hspace{-2mm}
\left[f_{k_2}'(p^*_{k_2}+x)-f_{k_1}'(p^*_{k_1}-x)\right]dx \nonumber \\
&>f_1(p^*_1)+f_2(p^*_2).
\end{align}This shows that the new solution where $p^*_{k_1},p^*_{k_2}$ are replaced by $p^*_{k_1}-\delta,p^*_{k_2}+\delta$ with the remaining $p_{k}$'s unchanged achieves a high objective value. With the conditions on $\delta$, it can be easily shown that the new solution is feasible, i.e., satisfies all box constraints and sum constraints. This contradicts with the assumption that $p_1^*,\cdots,p_K^*$ is the optimal solution.

For the sufficiency, since the power optimization for the subchannel set ${\mathcal C}_{n}$ in P4 becomes P3 with $P=P_{J_{n}^*}-P_{J_{n-1}^*}$, with the sufficiency proved in Lemma \ref{lemma-5}, it suffices to show that any two vectors, $\{p_{1,1},\cdots,p_{K,1}\}$ and $\{p_{1,2},\cdots,p_{K,2}\}$, satisfying (\ref{Con-P5}) must have the same set of active sum-constraints, i.e., $\{J_{1,1}^*,\cdots,J_{N_1,1}^*\}=\{J_{1,2}^*,\cdots,J_{N_2,2}^*\}$. In what follows, we prove this by contradiction from the last index sequentially to the fist index. Assume that $J_{N_1,1}^*\neq J_{N_2,2}^*$ and without loss of generality, assume that $J_{N_1,1}^*> J_{N_2,2}^*$. We have the following conclusions from the definition of $J_{N_1}^*,J_{N_2}^*$:
\begin{align*}
& {\sum}_{k=1}^{J_{N_2}^*}p_{k,1}\le P_{J_{N_2}^*}={\sum}_{k=1}^{J_{N_2}^*}p_{k,2},\\
& {\sum}_{k=1}^{J_{N_1}^*}p_{k,1}= P_{J_{N_1}^*}>{\sum}_{k=1}^{J_{N_1}^*}p_{k,2}.
\end{align*}
Thus $\sum_{k=J_{N_1}^*+1}^{J_{N_2}^*}p_{k,1}>\sum_{k=J_{N_1}^*+1}^{J_{N_2}^*}p_{k,2}=\sum_{k=J_{N_1}^*+1}^{J_{N_2}^*}\tau_k$,
where the last step is because of the condition on the fifth line of (\ref{Con-P5}). This is impossible since $p_{k,1}\le \tau_{k}$ for all $k$. Thus we have $J_{N_1,1}^*= J_{N_2,2}^*=J_N^*$. Now we move on to show $J_{N_1-1,1}^* = J_{N_2-1,2}^*$ via contradiction. Assume without loss of genitality that $J_{N_1-1,1}^* > J_{N_2-1,2}^*$. To help the presentation, define $X_1=\{n| J_{N_2-1,2}^*< n \le J_{N_1-1,1}^*\}$ and $X_2=\{n|J_{N_2-1,1}^*< n \le J_{N}^*\}$. Moreover $V_1= \hspace{-1mm}\sum_{n\in X_1}p_{n,2}$, $V_2= \hspace{-1mm}\sum_{n\in X_2}p_{n,2}$, $V_3= \hspace{-1mm}\sum_{n\in X_1}p_{n,1}$ and
$V_4= \hspace{-1mm}\sum_{n \in X_2 }p_{n,1}$. Similarly, from the definition of $J_{N_1-1,1}^*, J_{N_2-1,2}^*, J_N^*$, we have $V_2>V_4$ and $V_3+V_4>V_1+V_2$, which leads to $V_3>V_1$. Consider the power optimization for the subchannel set $X_1$ with sum-constraint $V_3$ and $V_1$ respectively. We have $f_{m}'(p_{m,1}) < f_{k}'(p_{k,2})$ for active subchannels $m,k \in X_1$ since $V_3>V_1$. On the other hand, as $V_2>V_4$,  it can be concluded that $f_{j}'(p_{j,1}) > f_{k}'(p_{k,2})$ for active subchannels $j,k \in X_2$. From the first condition of (\ref{Con-P5}), for the second solution, active channels in both $X_1$ and $X_2$ have the same $f_k'(p_{k,2})$. Thus, $f_{m}'(p_{m,1}) < f_{k}'(p_{j,1})$ for active channels $m\in X_1,j\in X_2$. This contradicts the last condition in (\ref{Con-P5}). The only possibility remaining is that the active subchannel set for solution 1 in $X_2$ to empty and the active subchannel set for solution 2 in $X_1\cup X_2$ is empty, and thus $p_{k,1}=\tau_{k}$ for $k>J_{N_1^*,1}+1$ and $p_{k,2}=\tau_{k}$ for $k>J_{N_2^*,2}+1$. In this situation, the sets of active sum-constraints for the two solutions reduces to $\{J_{1,1}^*,\cdots,J_{N_1-1,1}^*\}$ and $\{J_{1,2}^*,\cdots,J_{N_2-1,2}^*\}$, where the last elements can be omitted. By using the previous argument, we have $J_{N_1-1,1}^*=J_{N_2-1,2}^*$. The process can be repeated to show $\{J_{1,1}^*,\cdots,J_{N_1,1}^*\}=\{J_{1,2}^*,\cdots,J_{N_2,2}^*\}$.
\end{proof}

\begin{lemma}
Algorithm \ref{alg-P4} converges and achieves the optimal solution of P5.
\end{lemma}

\begin{proof} It can be seen from Step 7 of the algorithm that $N_{\min}$ increases for each iteration round of the `while' loop. Thus the algorithm will terminate after at most $K-1$ rounds. Since Algorithms 5-8 converge, Algorithm \ref{alg-P4} converges. We will show that the solution Algorithm \ref{alg-P4} finds satisfies the conditions in Lemma \ref{Lemma-8}, which are both necessary and sufficient for the optimal solution. It can be seen from Step 3 and Step 6 that Algorithm~\ref{alg-P4} is built on Algorithms~\ref{alg-P3-a}-\ref{alg-P3-d}. Thus it can directly be concluded from Lemma \ref{lemma-5} and Lemma \ref{lemma-6} that the first five lines of (\ref{Con-P5}) hold. In the following, we show that the solution also satisfies the final line in (\ref{Con-P5}). By the iterative nature of the algorithm, we only need to show the first part: $f'_{k_1}(p_{k_1})\ge f'_{k_2}(p_{k_2})$ for $k_1\in\mathcal{S}_{1,active}$ and $k_2\in\mathcal{S}_{2,active}$.

Denote the $J_{\min}$ value found in Step 5 in the $1$st iteration of the `while' loop as $J_1$, the corresponding maximum increasing rate as $f'^{(J_1)}$ (which is the same as $f'_{k_1}(p_{k_1})$ for $k_1\in\mathcal{S}_{1,active}$), and the corresponding active, lower-bound achieving, upper-bound achieving subchannel sets as $\mathcal{S}_{1,active},\mathcal{S}_{1,l},\mathcal{S}_{1,u}$. Denote the $J_{\min}$ value found in Step 5 in the $2$nd iteration of the `while' loop as $J_2$, the corresponding maximum increasing rate as $f'^{(J_2)}$ (which is the same as $f'_{k_2}(p_{k_2})$ for $k_2\in\mathcal{S}_{2,active}$), and the corresponding subchannel sets as $\mathcal{S}_{2,active},\mathcal{S}_{2,l},\mathcal{S}_{2,low}$.
Further denote the solution found in Step 3 of \textit{the first iteration} for $J=J_2$ as $p_{1,J_2},p_{2,J_2},\cdots,p_{J_2,J_2}$, the corresponding increasing rate of active subchannels as $f'^{(J_2,old)}$ and the corresponding subchannel sets as $\mathcal{S}_{J_2,active,old},\mathcal{S}_{J_2,l,old},\mathcal{S}_{J_2,u,old}$. We need to show that $f'^{(J_1)}\ge f'^{(J_2)}$.

Case 1: $\mathcal{S}_{1,active}=\emptyset$. In this case, from Step 4, we have  $f'^{(J_1)}=0$. As it is the maximum increasing rate, the increasing rates found in Step 4 for all $J=1,\cdots,K$ are zero. Thus $J_{1}=K$ via Step 5, and the algorithm terminates. The optimization problem reduces to the scenario that none of the sum-constraint is active and our algorithm finds the optimal solution.

Case 2: $\mathcal{S}_{1,active}\ne \emptyset$, equivalently $f'^{(J_1)}>0$. It is easy to see that for this case the sum-constraint for the corresponding problem with $J=J_1$ must be tight, otherwise, one can increase the power of the subchannels in the active set to achieve high objective value without violates any constraint. Denote the solution found after Step 6 of \textit{the first iteration} as $p_1,p_2,\cdots,p_{J_1}$. We thus have $\sum_{k=1}^{J_1}p_{k}=P_{J_1}$.

Now we consider the second iteration where $J_2$ is found. If $f'^{(J_2)}=0$, we have $f'^{(J_1)}\ge f'^{(J_2)}$, which ends the proof. If $f'^{(J_2)}>0$, similarly, we conclude that the sum-constraint for the corresponding problem is tight, i.e., $\sum_{k=J_1+1}^{J_2}p_{k}=P_{J_2}-P_{J_1}$ for $p_{J_1+1,\cdots,P_{J_2}}$ found after Step 6 of the 2nd iteration. Then we look at the following 3 optimization problems.
\begin{align}
& {\rm P4.1}: \hspace{-1mm} \max_{p_{1},\cdots,p_{J_1}} \hspace{-0mm}{\sum}_{k=1}^{J_1} f_k(p_k) \nonumber \\
& {\rm{s.t.}}  {\sum}_{k=1}^{J_1} p_{k} \le P_{J_1}, \gamma_k \le p_k \le \tau_k, k=1,\cdots,J_{1}.\\
& {\rm P4.2:} \hspace{-1mm}  \max_{p_{J_1+1},\cdots,p_{J_2}} {\sum}_{k=J_1+1}^{J_2} f_k(p_k) \nonumber \\
  &{\rm{s.t.}}   {\sum}_{k=J_1+1}^{J_2} p_{k} \le P_{J_2}-P_{J_1}, \nonumber\\ & \gamma_k \le p_k \le \tau_k, \ k=J_1+1,\cdots,J_2.\\
& {\rm P4.3:} \hspace{-1mm}  \max_{p_{1},\cdots,p_{J_2}}  {\sum}_{k=1}^{J_2} f_k(p_k) \nonumber \\
 & {\rm{s.t.}}  {\sum}_{k=1}^{J_2} p_{k} \le P_{J_2}, \gamma_k \le p_k \le \tau_k, \ k=1,\cdots,J_2.
\end{align}
The above analysis shows that 1) $p_1,\cdots,p_{J_1}$ is the optimal solution of P4.1 and $\sum_{k=1}^{J_1}p_{k}=P_{J_1}$; 2) $p_{J_1+1},\cdots,p_{J_2}$ is the optimal solution of P4.2 and $\sum_{k=J_1+1}^{J_2}p_{k}=P_{J_2}-P_{J_1}$; 3) $p_{1,J_2},p_{2,J_2},\cdots,p_{J_2,J_2}$ is the optimal solution of P4.3; and 4) $f'^{(J_1)}>f'^{(J_2,old)}$, i.e,  the increasing rate of the active subchannels for P4.1 is larger than that in P4.3.
If $f'^{(J_1)}<f'^{(J_2)}$, from 4), we have $f'^{(J_2)}>f'^{(J_2,old)}$.
Thus, by the concave property of $f_k$, for active subchannels $k_1\in\mathcal{S}_{1,active}$ and $k_2\in\mathcal{S}_{2,active}$, we have $p_{k_1}< p_{k_1,J_2}$ and $p_{k_2}< p_{k_2,J_2}$. For $k\in\mathcal{S}_{1,u}\cup \mathcal{S}_{2,u}$, subchannels that achieve their upper bounds in P4.1 or P4.2, from the result of Lemma \ref{lemma-4}, their increasing rates are higher than the rate of the active ones for P4.1 and P4.2, thus higher than the rate of the active ones in P4.3. Thus $k\in \mathcal{S}_{J_2,u,old}$, meaning these subchannels must achieve their power upper bounds, i.e., $p_{k,J_2}=\tau_k$. For $k\in\mathcal{S}_{1,l}\cup \mathcal{S}_{2,l}$, as $p_k$ achieves its lower bound, $p_{k,J_2}\ge \gamma_k=p_k$ must hold. Thus we have proved
\[
\sum_{k=1}^{J_2}p_{k,J_2}>\hspace{-4mm}\sum_{k\in\mathcal{S}_{1,l}\cup\mathcal{S}_{2,l}}\hspace{-4mm}p_{k}
+\hspace{-2mm}\sum_{k\in\mathcal{S}_{1,u}\cup\mathcal{S}_{2,u}}\hspace{-2mm}p_{k}
+\hspace{-2mm}\sum_{k\in\mathcal{S}_{1,active}\cup \mathcal{S}_{1,active}} \hspace{-4mm}p_k=P_{J_2},
\]
which violates the sum-power constraint of P4.3. This shows that the assumption $f'^{(J_1)}<f'^{(J_2)}$ cannot be true, which ends the proof.
\end{proof}

\subsection{Problems with Multiple Water-Levels}
\label{sect_multiple_level}
For MIMO-OFDM systems, some optimization problems aim at maximizing sum-utilities but with certain levels of fairness among subcarriers, e.g., maximizing the minimum sum-utility function among subcarriers. With the optimal diagonalizable structures, the resource allocation along the eigenchannels aligning the optimal spatial basis can be cast as:
\begin{align}
{\rm P5}: \hspace{-3mm}  \max_{p_{1,j},\cdots,p_{K,j}}\hspace{-1mm} \min_j\ & {\sum}_{k=1}^K  f_{k,j}(p_{k,j})\nonumber \\
  {\rm{s.t.}} \ \ &  {\sum}_{j=1}^J {\sum}_{k=1}^Kp_{k,j} \le P, \ \ p_{k,j} \ge0,
\end{align}where $f_{k,j}$'s have the same properties as the $f_k$-functions in P2.
It can be easily shown that P5 is equivalent to
\begin{align}
\label{p5-variant}
\max_{p_{1,j},\cdots,p_{K,j},t} \  & {\sum}_{k=1}^K f_{k,j}(p_{k,j}) \nonumber \\
{\rm{s.t.}}\ & {\sum}_{k=1}^K f_{k,1}(p_{k,1})=\cdots={\sum}_{k=1}^K f_{k,J}(p_{k,J})=t \nonumber \\
 &  {\sum}_{j=1}^J {\sum}_{k=1}^Kp_{k,j} \le P, \  p_{k,j} \ge0.
\end{align}This is because at the optimum, the objective values on all subcarriers are the same. Define $\mathcal{S}_{j,active}\triangleq\{p_{k,j}>0\}$ for $j=1,\cdots,J$.  The following lemma gives the necessary and sufficient conditions of the optimal solution of P5.
\begin{lemma} The following conditions are both necessary and sufficient for the optimal solution of P5:
\begin{equation}
\left\{ \begin{array}{ll} f_{k_1,j}'(p_{k_1,j})\hspace{-2mm}=\hspace{-2mm}f_{k_2,j}'(p_{k_2,j}) \text{ for }k_1,k_2\in\mathcal{S}_{j,active},\ \hspace{-2mm} j=1,\cdots J,\\
f_{k_1,j}'(p_{k_1,j} \hspace{-1mm}= \hspace{-1mm}0)\hspace{-1mm}\le\hspace{-2mm}f_{k_2,j}'(p_{k_2,j}) \text{ for }  \hspace{-1mm} k_1\notin\mathcal{S}_{j,active}, \hspace{-1mm} k_2\in\mathcal{S}_{j,active},\\
{\sum}_{k=1}^K f_{k,1}(p_{k,1})=\cdots={\sum}_{k=1}^K f_{k,J}(p_{k,J})=t,\\
{\sum}_{j=1}^J {\sum}_{k=1}^Kp_{k,j}=P.
\end{array}
\right. \label{P5-cond}
\end{equation}
\end{lemma}
\begin{proof}
The necessity of the third and fourth lines in (\ref{P5-cond}) is obvious. To show the necessity of the first and second ones, denote the optimal solution of P5 as $p_{k,j}^*$. It is obvious that $\{p_{1,j}^*,\cdots,p_{K,j}^*\}$ must be the optimal solution for the following reduced problem:
\begin{align}
{\rm P5.}j: \ \hspace{-4mm} \max_{p_{1,j},\cdots,p_{K,j}}   & {\sum}_{k=1}^K f_{k,j}(p_{k,j}) \nonumber \\
{\rm{s.t.}}\  &  {\sum}_{k=1}^Kp_{k,j} \le {\sum}_{k=1}^Kp_{k,j}^*, \ p_{k,j} \ge0
\end{align}
since P5.$j$ is a subproblem. From Lemma \ref{lemma-1}, the necessity of the first and second lines of (\ref{P5-cond}) is proved. In P5.$j$, it is obvious that ${\sum}_{k=1}^K f_{k,j}(p_{k,j})$ is monotonic with respect to ${\sum}_{k=1}^Kp_{k,j}$. The solution of $p_{k,j}$'s satisfying (\ref{P5-cond}) have the same objective value for (\ref{p5-variant}) as otherwise the third and fourth lines of (\ref{P5-cond}) cannot hold simultaneously. Then the proof is finished.
\end{proof}

\begin{algorithm}[t]
\caption{Proposed algorithm for problems P5, P6, and P7.}
\label{alg-P5}
\begin{algorithmic}[1]
\STATE $\text{Initialize}$ $\mathcal{I}_{k,j}=1$  $\forall k,j$;
\STATE Calculate $\mu_j$'s and $p_{k,j}$'s by jointly solving  Eqns.~(\ref{P5-1}) and (\ref{P5-3});
\WHILE{$\text{length}( \text{find}(\{p_{k,j}\} < 0 )) > 0$}
\STATE $\mathcal{S}_{j,inactive}=\{(k,j)|p_{k,j} \le 0\}$;
\STATE $\text{Set}$ ${\mathcal{I}}_{k,j}=0$ $\text{for}$ $(k,j)\in \mathcal{S}_{j,inactive}$;
\STATE  Calculate $\mu_j$'s and $p_{k,j}$'s  by jointly solving Eqns.~(\ref{P5-1}) and (\ref{P5-3});
\ENDWHILE
\RETURN $p_{k,j}$'s.
\end{algorithmic}
\end{algorithm}Thus following the viewpoint and scheme in Section \ref{sec-2-2}, we define $g_{k,j}(\cdot)\triangleq {\rm Inv}[f_{k,j}'](\cdot)$ and introduce the indication operator $\mathcal{I}_{k,j}$ as: ${\mathcal{I}}_{k,j}=1$ if $p_{k,j}>0$ and ${\mathcal{I}}_{k,j}=0$ if $p_{k,j}=0$. The necessary conditions can be rewritten as
\begin{align}
&p_{k,j}=g_{k,j}(\mu_j){\mathcal{I}}_{k,j}, \ {\sum}_{k=1}^K f_{k,j}(p_{k,j})= t, \label{P5-1}\\
&P={\sum}_{j=1}^J{\sum}_{k=1}^Kg_{k,i}(\mu_j){\mathcal{I}}_{k,j},\label{P5-3}
\end{align}
where $\mu_j$ is the final increasing rate (also referred to as the water level) for the $j$th subcarrier.
\textbf{Algorithm~\ref{alg-P5}} is thus proposed whose solution values satisfy the necessary and sufficient conditions. Similar to previous discussions, the complexity of Algorithm~\ref{alg-P5} is ${\mathcal{O}}(J^2K^2)$. In the following, two application examples are given to demonstrate the application of Algorithm \ref{alg-P5} for optimization problems in wireless communications.

\noindent \textbf{Example 9:} With the optimal  diagonalization,
the maximum-sum-weighted-MSE minimization problem for MIMO-OFDM systems can be formulated as
\begin{align}
\max_{p_{1,j},\cdots,p_{K,j}} \min_{j} \ &-{\sum}_{k=1}^K\frac{  w_{k,j}}{b_{k,j}+a_{k,j}p_{k,j}} \nonumber \\
{\rm{s.t.}}\ \ \ & {\sum}_{j=1}^J {\sum}_{k=1}^K p_{k,j}\le P,
\end{align}where the term $\sum_{k=1}^K{  w_{k,j}}/({b_{k,j}+a_{k,j}p_{k,j}})$ is the weighted sum-weighted-MSE on the $j$th subcarrier. From (\ref{P5-1}),
\begin{align}
\label{example_9_equ_1}
p_{k,j}=\hspace{-2mm}g_{k,j}(\mu_j)\mathcal{I}_{k,j}=(\sqrt{{w_{k,j}}/
{\mu_ja_{k,j}}}-{b_{k,j}}/{a_{k,j}})\mathcal{I}_{k,j}.
\end{align} By using (\ref{example_9_equ_1}) in (\ref{P5-1}), we have
\begin{align}
\sum_{k=1}^K\frac{w_{k,j}}{b_{k,j}+a_{k,j}p_{k,j}}
&=\!\!\sum_{k=1}^{K}\!\!\frac{w_{k,j}}{b_{k,j}}(1-{\mathcal{I}}_{k,j})+\!\!
\sum_{k=1}^K\!\!{\sqrt{\frac{\mu_jw_{k,j}}{a_{k,j}}}} {\mathcal{I}}_{k,j}\nonumber \\
&=t,
\end{align}from which we can solve $\mu_j$ as a function of $t$ as follows:
\begin{align}
\label{example_9_equ_2}
\sqrt{{1}/{\mu_j}}=\frac{
{{\sum}_{k=1}^K\hspace{-1mm}\sqrt{w_{k,j}/a_{k,j}}}\mathcal{I}_{k,j}}{t-\hspace{-1mm}{\sum}_{k=1}^K\hspace{-1mm}w_{k,j}/b_{k,j}(1-\mathcal{I}_{k,j})}.
\end{align}With (\ref{example_9_equ_2}), the sum power constraint (\ref{P5-3}) can be rewritten as
\begin{align}
\label{example_9_equ_3}
\sum_{j=1}^J\frac{(\sum_{k=1}^K\hspace{-1mm}\sqrt{\frac{w_{k,j}}{{a_{k,j}}}}\mathcal{I}_{k,j})^2}
{t-\sum_{k=1}^K\hspace{-1mm}\frac{w_{k,j}}{b_{k,j}}(1-\mathcal{I}_{k,j}) }
-\sum_{j=1}^J\hspace{-1mm}\sum_{k=1}^K\hspace{-1mm}
\frac{b_{k,j}}{a_{k,j}}{\mathcal{I}}_{k,j}=P.
\end{align}For given ${\mathcal{I}}_{k,j}$'s, bisection search can be used to compute $t$ from (\ref{example_9_equ_3}). Then with (\ref{example_9_equ_2}), $\mu_j$'s can be computed and with  (\ref{example_9_equ_1}), $p_{k,j}$'s can be computed.

\noindent \textbf{Example 10:}
The maximization of the minimum weighted mutual information for MIMO-OFDM systems with
optimal diagonalization can be formulated as the following:
\begin{align}
\max_{p_{1,j},\cdots,p_{K,j}}\min_{j}  \ & {\sum}_{k=1}^K w_{k,j}{\rm{log}}|{b_{k,j}+a_{k,j}p_{k,j}}| \nonumber \\
{\rm{s.t.}}\ &  {\sum}_{j=1}^J{\sum}_{k=1}^K p_{k,j}\le P, \ \ p_{k,j} \ge 0.
\end{align}This optimization plays a key role in the transceiver optimizations of MIMO-OFDM systems with nonlinear Tomlinson-Harashima precoding (THP) or decision feedback equalization.
Following the same logic as that for Example 9, we have via manipulating (\ref{P5-1})
\begin{align}
\label{example_10_equ_1}
&p_{k,j}\hspace{-2mm}=\hspace{-2mm}\left(\frac{w_{k,j}}{\mu_j}
-\frac{b_{k,j}}{a_{k,j}}\right){\mathcal{I}}_{k,j}.\\
&{\rm{log}}\frac{1}{\mu_j}=\frac{t}{{\sum_{k=1}^K}\hspace{-1mm}w_{k,j}{\mathcal{I}}_{k,j}} \nonumber \\
&-\frac{\sum_{k=1}^K\hspace{-1mm}w_{k,j}{\rm{log}}b_{k,j}(1-\mathcal{I}_{k,j})-\hspace{-1mm}
{\sum_{k=1}^K}\hspace{-1mm}w_{k,j}{\rm{log}}(a_{k,j}w_{k,j})\mathcal{I}_{k,j}}
{{\sum_{k=1}^K}\hspace{-1mm}w_{k,j}{\mathcal{I}}_{k,j}}. \label{P5-52}
\end{align}By using the above two equations in (\ref{P5-3}), the value of $t$ can be computed by solving the following equation:
\begin{align}
{\sum}_{j=1}^J{\sum}_{k=1}^K\left({w_{k,j}}/{\mu_j}
-{b_{k,j}}/{a_{k,j}}\right){\mathcal{I}}_{k,j}=P.
\label{P5-53}
\end{align}Because of the monotonicity of the left-hand-side of (\ref{P5-53}) with respect to $t$, bisection search can be used to find the value of $t$. The values of $\mu_j$'s can be obtained from (\ref{P5-52}), and the values of $p_{k,j}$'s can be obtained from (\ref{example_10_equ_1}).

In P5, only the zero lower bound is considered for $p_{k,j}$. It can be generalized to include box constraints. The new optimization problem is given in the following:
\begin{align}
\label{mix_opt}
{\rm P5.1}: \ \max_{p_{1,j},\cdots,p_{K,j}} \min_j \ &{\sum}_{k=1}^K f_{k,j}(p_{k,j}) \nonumber \\
{\rm{s.t.}} \ & {\sum}_{j=1}^J {\sum}_{k=1}^Kp_{k,j} \le P \nonumber \\
  &\gamma_{k,j} \le p_{k,j} \le \tau_{k,j}, p_{k,j} \ge0.
\end{align}An algorithm can be designed based on the combination of \textbf{Algorithms \ref{alg-P4}} and \textbf{\ref{alg-P5}}.
Specifically, an algorithm for P5.1 can be formed by changing ``one of \textbf{Algorithms \ref{alg-P3-a}-\ref{alg-P3-d}}'' in Steps 2,8, 9 of \textbf{Algorithm \ref{alg-P4}}  to ``\textbf{Algorithms \ref{alg-P5}}''.

\noindent \textbf{Example 11:}
The minimization of the maximum weighted MSE under box constraints for MIMO-OFDM systems can be cast as P5, where
\begin{align}
f_{k,j}(p_{k,j})={  w_{k,j}}/({{b_{k,j}+a_{k,j}p_{k,j}}}).
\end{align}A solution can be found with the proposed mixed algorithm where intermediate calculations are given in (\ref{example_9_equ_1})-(\ref{example_9_equ_2}).

\noindent \textbf{Example 12:}
Another example is the combination of Examples 6 and 10 for the minimum capacity maximization in MIMO-OFDM systems, where
\begin{align}
f_{k,j}(p_{k,j})=w_{k,j}{\rm{log}}|{b_{k,j}+a_{k,j}p_{k,j}}|.
\end{align}
A solution can be found with the proposed mixed algorithm where intermediate calculations are given in (\ref{example_10_equ_1})-(\ref{P5-53}).

\subsection{Problems with Aggregation Structures}
\label{sect_robust_wf}

Another set of resource allocation problems have aggregation clustered water-filling structures, e.g., the power allocation in MIMO-OFDM systems under imperfect CSI. With the optimal diagonalizable structure, this type of optimization can be formulated and/or transformed in the following form \cite{ClusterWF}:
\begin{align}
 \hspace{-4mm}{\rm P6}:  \hspace{-2mm} \max_{\{p_{k,j}\},\{P_j\}}   & {\sum}_{j=1}^J{\sum}_{k=1}^K f_{k,j}(p_{k,j},P_j) \nonumber \\
 \ {\rm{s.t.}} \  & {\sum}_{k=1}^K p_{k,j} \le P_j, {\sum}_{j=1}^JP_j \le P, \ p_{k,j} \ge0,
\end{align}where $p_{k,j}$ is the power for Channel/Cluster $j$ on Subcarrier $k$ and $\{P_1,\cdots,P_J\}$ is a set of auxiliary variables representing the total powers over the channels across subcarriers. One important difference of P6 to the previous problems lies in the structure of $f_{k,j}$. Other than $p_{k,j}$'s, it is also a function of $P_j$. With respect to $p_{k,j}$ while $P_j$ is considered fixed, $f_{k,j}$ is assumed to have the same properties (strictly concave, increasing, and continuously differentiable) as before.

It is obvious that for given $P_j$'s, the optimization problem P6 decouples into $J$ subproblems, one for each $J$ and all following the format of P1. Thus similar to Section \ref{sec_new_understanding}, we introduce  the index operator as: ${\mathcal{I}}_{k,j}=1$ if $p_{k,j}>0$ and ${\mathcal{I}}_{k,j}=0$ if $p_{k,j}=0$. From the KKT conditions,
we have
\begin{align}
\label{robust_water_1}
&p_{k,j}=g_{k,j}(\mu_j,P_j){\mathcal{I}}_{k,j}, \
P_{j}={\sum}_{k=1}^Kg_{k,j}(\mu_j,P_j){\mathcal{I}}_{k,j}.
\end{align}Each subproblem can be found via Algorithm \ref{algorithm_1}. To solve $P_j$'s, from KKT conditions of P6, we have\begin{align}
\label{robust_water_3}
&{\sum}_{k=1}^K {\partial f_{k,j}(p_{k,j},P_j)}/{\partial P_j}=\mu_j-\gamma, \\ \label{robust_water_4}
&P={\sum}_{j=1}^J{\sum}_{k=1}^Kg_{k,j}(\mu_j,P_j){\mathcal{I}}_{k,j},
\end{align}where $\mu_j$'s are the Lagrange multipliers for the first set of constraints of P6 and $\lambda$ is the multiplier for the second constraint. Based on (\ref{robust_water_1})-(\ref{robust_water_4}), by following the framework in Section \ref{sec-2-2}, \textbf{Algorithm~\ref{alg-P5}} can be used to find the solution values of P6 as well by only replacing the equations in Steps 2 and 6 to (\ref{robust_water_1})-(\ref{robust_water_4}). Two application examples for P6 are the weighed sum-rate maximization  for MIMO-OFDM systems under imperfect CSI \cite{Xingchinascience} and the weighted sum-MSE minimization \cite{ClusterWF}. Both can be transformed into P6 with the optimal diagonalization structure. The optimal solution can be found via \textbf{Algorithm~\ref{alg-P5}}, where more details are omitted here and interested readers are referred to \cite{ClusterWF}. In the following, only the corresponding objective functions are given.

\noindent \textbf{Example 13:} With optimal diagonalization structure, the problem of the weighted sum-rate maximization for MIMO-OFDM systems under imperfect CSI can also be formulated as P6 \cite{Xingchinascience} where
\begin{align}
\label{robust_equ_a}
 \hspace{-2mm}f_{k,j}(p_{k,j}, P_j)={w_{k,j}}{\rm{log}}\left(1+{a_{k,j}
p_{k,j}}/({{\sigma_{e_j}^2P_j+\sigma_{n}^2}})\right).
\end{align} By jointly solving (\ref{robust_water_1})-(\ref{robust_water_4}), the values of $p_{k,j}$'s can be obtained for any given $\mathcal{I}_{k,j}$.

\noindent \textbf{Example 14:} Similarly with optimal diagonalization structure, the problem of the weighted sum-MSE minimization for MIMO-OFDM systems under imperfect CSI can also be formulated as P6 \cite{ClusterWF} where
\begin{align}
\label{robust_equ_b}
f_{k,j}(p_{k,j}, P_j)=-\frac{w_{k,j}}{1+{a_{k,j}
p_{k,j}}/({\sigma_{e_j}^2P_j+\sigma_{n}^2})}.
\end{align} The values of $p_{k,j}$'s can be computed for any given $\mathcal{I}_{k,j}$, by jointly solving (\ref{robust_water_1})-(\ref{robust_water_4}).

\subsection{Problems with Aggregation Structures and Multiple Water Levels}
A combination of P5 and P6 can be formulated as follows:
\begin{align}
{\rm P7}:\ \max_{\{p_{k,j}\},\{P_j\}} \min_j \ & {\sum}_{k=1}^K f_{k,j}(p_{k,j},P_j) \nonumber \\
{\rm{s.t.}}\  & {\sum}_{k=1}^K p_{k,j} \le P_j \nonumber \\
& {\sum}_{j=1}^JP_j \le P,\ p_{k,j} \ge0,
\end{align}where $f_{k,j}$'s have the same properties as in P6. The problem models the optimization of MIMO-OFDM systems under imperfect CSI with consideration of fairness.
By following the derivations in Sections \ref{sect_multiple_level} and \ref{sect_robust_wf}, we have, from the following KKT conditions,
\begin{align}
&p_{k,j}=g_{k,j}(\mu_j,P_j){\mathcal{I}}_{k,j},  \\
&P_{j}={\sum}_{k=1}^Kg_{k,j}(\mu_j,P_j){\mathcal{I}}_{k,j}. \label{P7-1} \\
& {\sum}_{k=1}^K\hspace{-1mm} f_{k,j}(p_{k,j},P_j)= t, \\
&P={\sum}_{j=1}^J{\sum}_{k=1}^Kg_{k,j}(\mu_j,P_j){\mathcal{I}}_{k,j}. \label{P7-4}
\end{align}\textbf{Algorithm~\ref{alg-P5}} can be used to find the solution of P6 as well by only replacing the equations in Steps 2 and 6 with (\ref{P7-1})-(\ref{P7-4}).

\noindent \textbf{Example 15:} With optimal diagonalization structure, the problem of the minimum weighted sum-rate maximization for MIMO-OFDM systems under imperfect CSI can also be formulated as P7 with (\ref{robust_equ_a}).
By jointly solving (\ref{P7-1}) and (\ref{P7-4}), the values of $p_{k,j}$'s can be obtained for any given $\mathcal{I}_{k,j}$. Thus Algorithm \ref{alg-P5} can be used to find the optimal solutions.

\noindent \textbf{Example 16:} On the other hand, with optimal diagonalization structure, the problem of the maximum weighted sum-MSE minimization for MIMO-OFDM systems under imperfect CSI can also be formulated as P7 with (\ref{robust_equ_a}). Similar to Example 15,
 the values of $p_{k,j}$'s can be obtained for any given $\mathcal{I}_{k,j}$ via jointly solving (\ref{P7-1}) and (\ref{P7-4}). Thus Algorithm \ref{alg-P5} still works.

\subsection{Problems with Subchannel Clsuters/Groups}

In \cite{PeterHe2017,PeterHeTSG2018,PeterHeTWC2018}, water-filling algorithms are designed to implement over subchannel groups (clusters of subchannels) with individual group/cluster constraints.
In the following we show that the proposed water-filling algorithm is applicable to this case.
When there are several disjoint subchannel groups ${\mathcal{C}}_k$'s, and for each subchannel group  there is a minimum power constraint, the corresponding optimization under a set of group lower bound constraints is given as follows
\begin{align}
\label{opt_lower_cluster}
 \hspace{-1mm} {\rm P8:}  \hspace{-1mm} \max_{p_1,\cdots,p_K} \  &{\sum}_{k=1}^K f_k(p_k) \nonumber \\
{\rm{s.t.}} \ & {\sum}_{k=1}^K p_{k} \le P,{\sum}_{k \in \mathcal{C}_j} p_k \ge \gamma_j, p_k\ge 0.
\end{align}The constraint $p_k\ge0$ must be taken into account to guarantee that the final solution is of practical meanings. Actually from the mathematical viewpoint the optimization problem without $p_k\ge0$ is much easier to handle. In our work, we consider a general case that ${\mathcal{C}}_1\cup\cdots\cup {\mathcal{C}}_J \subseteq\{1,\cdots,K\}$ and ${\mathcal{C}}_n\cap{\mathcal{C}}_m=\emptyset$ for $n\not= m$. In other words, it is possible that there are some subchannels that do not belong to any of the subchannel groups. Comparing P1.1 and P8, it is obvious that  in P8 the constraints are added on subchannel groups instead of each subchannel. Each group ${\mathcal{C}}_k$ can be regarded as a virtual subchannel for which there are two kinds of constraints, i.e., the power allocated to each subchannel must be nonnegative and the sum power of its subchannels should be larger than a threshold. Inspired by this idea, to solve the optimization problem (\ref{opt_lower_cluster}), Algorithm~\ref{alg-2a} is proposed. Comparing Algorithm~\ref{alg-2} and Algorithm~\ref{alg-2a}, it can be concluded that the judgement step for each suchannel constraint is replaced by the judgement step for each subchannel group constraint. In other words, these new operations are defined based on the groups of subchannels. In each iteration, the computation of $\mu$ and $p_k$  in Algorithm~\ref{alg-2} is replaced by performing \textbf{Algorithm 1} in Algorithm~\ref{alg-2a}.
\begin{algorithm}[t]
\caption{Proposed algorithm under arbitrary group lower-bound constraints for P8.}
\label{alg-2a}
\begin{algorithmic}[1]
\STATE $\mathcal{I}_j=1$  $\text{for}$ $j=1,\cdots,J$;
\STATE $\text{Perform}$ $\textbf{Algorithm}~\ref{algorithm_1}$  over all subchannels with sum power $P$ ;
\WHILE{$\text{length}( \{j|\sum_{k\in {\mathcal{C}}_j}p_k < \gamma_j \})>0$}
\STATE Find $\mathcal{S}_{L}=\{j|\sum_{k\in {\mathcal{C}}_j}p_k \le \gamma_j\}$;
\STATE $\text{Set}$ ${\mathcal{I}}_j=0$ $\text{for}$ $ j \in \mathcal{S}_{L}$;
\STATE $\text{Perform}$ $\textbf{Algorithm~\ref{algorithm_1}}$ $\text{over}$ the subchannel set  $\{k|k\notin \mathcal{C}_{j}, j \in \mathcal{S}_{L} \}$ with power $P-\sum_{j=1}^J\gamma_j(1-{\mathcal{I}}_j)$ ;
\ENDWHILE
\STATE  $\text{Perform}$ $\textbf{Algorithm~\ref{algorithm_1}}$ $\text{over}$ $\mathcal{C}_j$, $ j \in \mathcal{S}_{L}$ with power $\gamma_j$ ;
\RETURN $p_k$'s.
\end{algorithmic}
\end{algorithm}
\begin{algorithm}[t]
\caption{Proposed algorithm under arbitrary group box constraints for P9.}
\label{alg-2b}
\begin{algorithmic}[1]
\STATE $\mathcal{J}_j=1$  $\text{for}$ $j=1,\cdots,J$;
\STATE $\text{Perform}$ $\textbf{Algorithm~\ref{alg-2a}}$ $\text{over}$ all subchannels with sum power $P$ ;
\WHILE{$\text{length}( \{j|\sum_{k\in {\mathcal{C}}_j}p_k > \tau_j\})>0$}
\STATE Find $\mathcal{S}_{U}=\{j|\sum_{k\in {\mathcal{C}}_j}p_k \ge \tau_j\}$;
\STATE $\text{Set}$ ${\mathcal{J}}_j=0$ $\text{for}$ $ j \in \mathcal{S}_{U}$;
\STATE $\text{Perform}$ $\textbf{Algorithm~\ref{alg-2a}}$ $\text{over}$ the subchannel set  $\{k|k\notin \mathcal{C}_{j}, j \in \mathcal{S}_{U} \}$ with power $P-\sum_{j=1}^J\tau_j(1-{\mathcal{J}}_j)$ ;
\ENDWHILE
\STATE  $\text{Perform}$ $\textbf{Algorithm~\ref{alg-2a}}$ $\text{over}$ $\mathcal{C}_j$, $ j \in \mathcal{S}_{U}$ with power $\tau_j$ ;
\RETURN $p_k$'s.
\end{algorithmic}
\end{algorithm}

Moreover, we take a step further to investigate a more complicated optimization problem with subchannel groups under box constraints. The corresponding mathematical formula is given in the following
\begin{align}
\label{opt_lower_upper_cluster}
 \hspace{-4mm}{\rm P9:} \hspace{-2mm}  \max_{p_1,\cdots,p_K} \  &{\sum}_{k=1}^K f_k(p_k) \nonumber \\
{\rm{s.t.}} & {\sum}_{k=1}^K p_{k} \le P,\gamma_j \le {\sum}_{k \in \mathcal{C}_j} p_k\le \tau_j, p_k\ge 0,
\end{align} which is an extension of P3. Similarly, Algorithm~\ref{alg-2b} is proposed. The design logic is similar to Algorithm~\ref{alg-P3-a} and  Algorithm~\ref{alg-2b} can be recognized as an extension. In a nutshell, the proposed water-filling algorithm design framework can be applied to solve the optimization problem.

\section{Numerical Results}
\label{sect_numerical_result}

Due to space limitations, only the most general and representative cases are simulated and shown.
In most existing works, sum capacity maximization is the main focus. It is worth noting that at high signal-to-noise ration (SNR) regime, the optimal power allocation for capacity maximization is close to equal power allocation among all the subchannels or subcarriers, making this scenario less interesting. However, the optimal solution for sum MSE minimization is totally different. In the high SNR regime, the powers allocated to different subchannels are significantly different from each other. Moreover, MIMO-OFDM system is a well-known communication system and has been adopted
in 4G-LTE. With these considerations, sum-MSE minimization for MIMO-OFDM system is investigated in this section.

Specifically, sum-MSE minimization for MIMO-OFDM systems with box constraints given in \textbf{Example 7} is considered for illustration. In practice, box constraints are widely imposed to control the fairness among subcarriers/subchannels and/or limit the peak-to-average-power-ratio (PAPR) in OFDM systems. The sum-MSE minimization problem in fact can be formulated as Problem P3 in (\ref{Opt_2}).
In the simulations, the number of antennas at both the source and destination are set as $K=4$. A multi-path channel with 7 paths is considered. Each channel tap is generated according to a Gaussian distribution. Further, the time-domain decaying factor of the channel taps is 0.5, i.e., the covariance of the $l$th tap is $1/2^{l-1}$, under which the first tap's covariance and the sum variance of all the taps are normalized to one.
The number of subcarriers is $J=256$. The maximum transmit power is denoted by $P$ and the SNR is defined as $P/(J\sigma_{n}^2)$. The box constraints are considered as $\gamma P/(KJ) \le p_{k,j}\le \tau P/(KJ)$, where $\gamma$ and $\tau$ are introduced to adjust the bounds of the power allocations to exhibit the effect of box constraints. Each point in the figures is obtained by an average over $10^3$ independent channel realizations. The sum-MSE under different SNRs is shown in Fig.~\ref{Fig_1} where it can be observed that the box constraints do matter and would affect the sum-MSE. In Figs.~\ref{Fig_2} and  \ref{Fig_3}, we show the power allocation results across the 1024 eigen-channels for one channel realization at the SNR of 20 dB with different\begin{figure}[t]
\centering
\includegraphics[width=.5\textwidth]{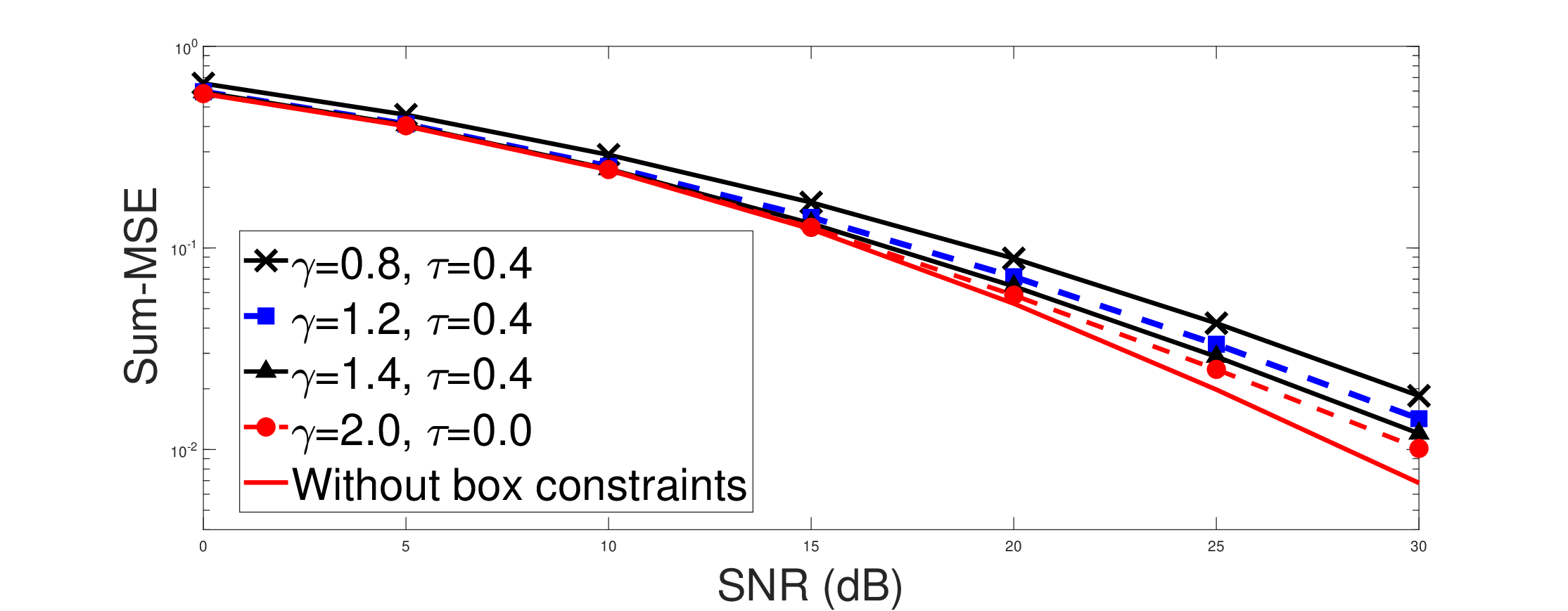}
\caption{The MSEs under different box constraints for single-user MIMO-OFDM system with 4 antennas at transceiver nodes.}\label{Fig_1}
\end{figure}\begin{figure}[t]
\centering
\includegraphics[width=.5\textwidth]{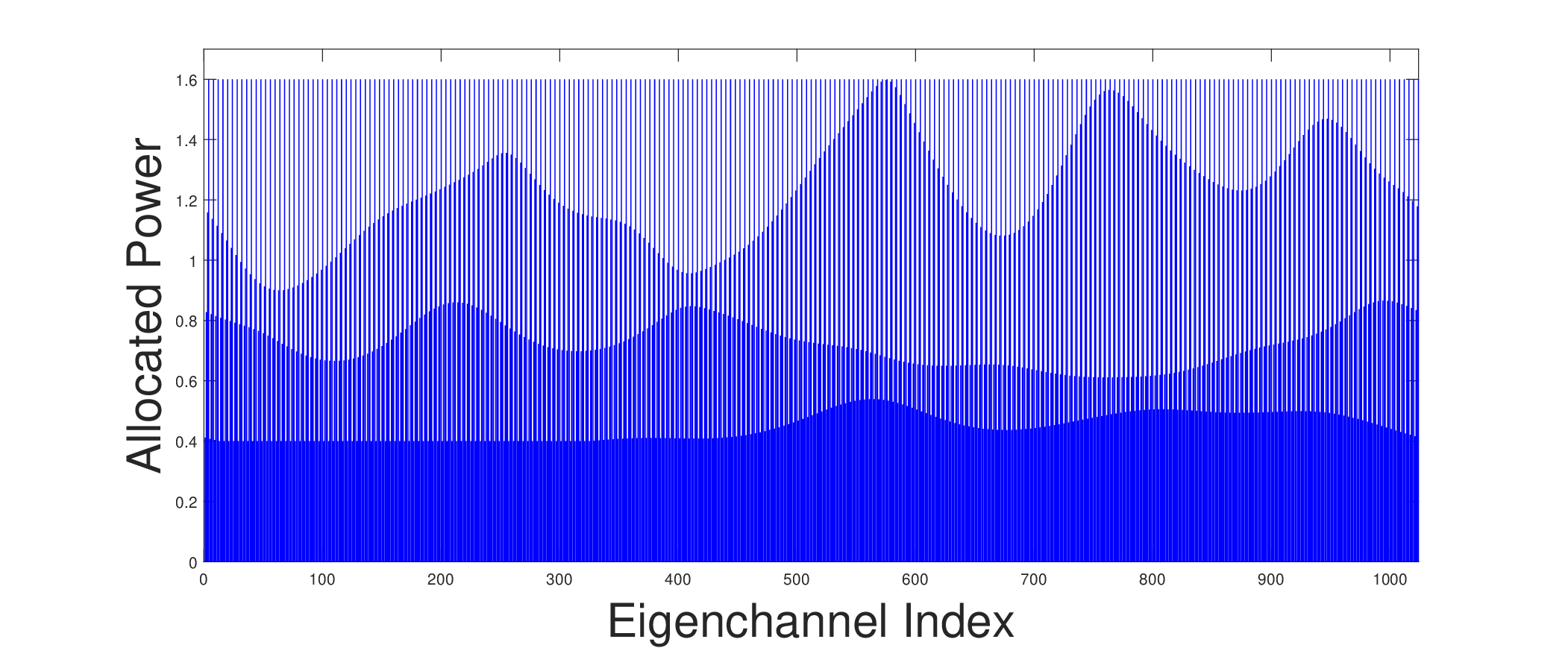}
\caption{The power allocation for one channel realization at the SNR of 20 dB with $\gamma=0.4$ and $\tau=1.6$.}\label{Fig_2}
\end{figure}
\begin{figure}[!ht]
\centering
\includegraphics[width=.5\textwidth]{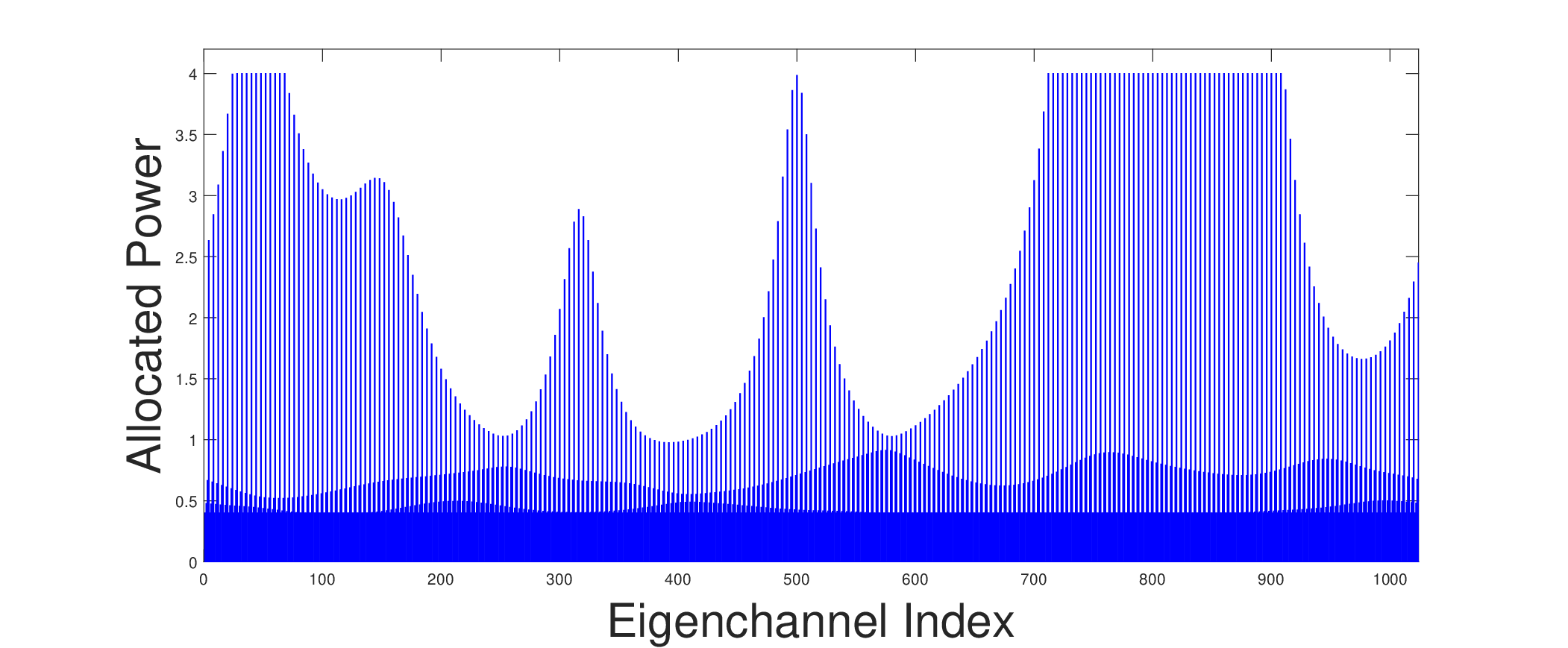}
\caption{The power allocation for one channel realization at the SNR of 20 dB with $\gamma=0.4$ and $\tau=4$.}\label{Fig_3}
\end{figure}values of $\gamma$ and $\tau$. It can be seen that box constraints have significant impact on power allocation as the channel qualities of different eigenchannels fluctuate significantly.
Furthermore, it can be seen that both the lower bounds and upper bounds can be met. Even when the lower bound is very small or the upper bound is very high, the constraints are usually active.


\section{Conclusions}
\label{sect_conclusion}
Optimization problems with water-filling solutions widely arise and are fundamental for the resource allocation in wireless communications and networking. To find the solution values, practical and efficient algorithms should be carefully designed. In this work, a new viewpoint for such optimization problems has been proposed by understanding the power allocation procedure dynamically and considering the changes of the increasing rates on each subchannel. With this viewpoint and rigorous analysis of the solution structure, a comprehensive framework for algorithm designs has been presented in this paper. Five different kinds of optimization problems have been studied sequentially according to their complexities and efficient algorithms have been proposed. Based on our results, it can be concluded that the various algorithm designs share common fundamentals. We also expect that the proposed design logic and algorithms can be used to resolve new optimization problems in future wireless systems.

\section*{Acknowledgement}

The authors sincerely appreciate the anonymous reviewers and the editor for their excellent review works on our paper.

\end{document}